\DeclareMathOperator*{\argmin}{arg\,min}
\renewcommand{\le}{\leqslant}
\renewcommand{\leq}{\leqslant}
\renewcommand{\geq}{\geqslant}
\renewcommand{\succeq}{\succcurlyeq}
\newcommand{\bbzOm}{\mathrm{bias}_{\beta_Z}^{\mathrm{(OM)}} }
\newcommand{\bbzME}{\mathrm{bias}_{\beta_Z}^{\mathrm{(ME)}} }
\newcommand{\bbxME}{\mathrm{bias}_{\beta_X}^{\mathrm{(ME)}} }
\newcommand{\bbME}{\mathrm{bias}_{\beta}^{\mathrm{(ME)}} }
\newcommand{\Ozone}{\mathrm{O}_3}
\newcommand{\CO}{\mathrm{CO}}
\newcommand{\NitrogenDi}{\mathrm{NO}_2}
\newcommand{\PMBig}{\mathrm{PM}_{10}}
\newcommand{\PMSmall}{\mathrm{PM}_{2.5}}
\newcommand{\SulfurDi}{\mathrm{SO}_{2}}
\newcommand{\PPC}{\text{Partial } \mathrm{PC}_{+}}
\newcommand{\PC}{\mathrm{PC}_{+}}
\newcommand{\e}{\mathbb{E}}
\newcommand{\var}{\mathrm{Var}}
\newcommand{\cov}{\mathrm{Cov}}
\newcommand{\corr}{\mathrm{Corr}}
\newcommand{\real}{\mathbb{R}}
\newcommand{\tran}{\mathsf{T}}
\newcommand{\indep}{\raisebox{0.05em}{\rotatebox[origin=c]{90}{$\models$}}}
\newcommand{\ce}{\mathcal{E}}
\newcommand{\cw}{\mathcal{W}}
\newcommand{\cx}{\mathcal{X}}
\newcommand{\cmw}{\mathcal{M}_W}
\newcommand{\cmx}{\mathcal{M}_X}
\newcommand{\cy}{\mathcal{Y}}
\newcommand{\cz}{\mathcal{Z}}
\newcommand{\tx}{W}
\newcommand{\skipitems}[1]{\addtocounter{\@enumctr}{#1}}
\newtheorem{theorem}{Theorem}
\newtheorem{corollary}{Corollary}
\newtheorem{proposition}{Proposition}
\theoremstyle{remark}
\newtheorem{definition}{Definition}
\begin{document}

\title{ \sffamily Biases in estimates of air pollution impacts: the role of omitted variables and measurement errors}

\begin{frontmatter}

\begin{aug}
        
\author[A]{\fnms{Dan M.}~\snm{Kluger}\ead[label=e1]{kluger@stanford.edu}}
\author[D] {David B. Lobell\ead[label=e5]{dlobell@stanford.edu}}
\and
\author[A]{\fnms{Art B.}~\snm{Owen}\ead[label=e2]{owen@stanford.edu}}
\address[A]{ Department of Statistics, Stanford University    
\printead[presep={,\ }]{e1,e2}}
\address[D]{ Department of Earth System Science and Center on Food Security and the Environment, Stanford University \printead[presep={,\ }]{e5}}

\end{aug}

\begin{abstract}
Observational studies often use linear regression to assess the effect of ambient air pollution on outcomes of interest, such as human health indicators or crop yields. Yet pollution datasets are typically noisy and include only a subset of the potentially relevant pollutants, giving rise to both measurement error bias (MEB) and omitted variable bias (OVB). While it is well understood that these biases exist, less is understood about whether these biases tend to be positive or negative, even though it is sometimes falsely claimed that measurement error simply biases regression coefficient estimates towards zero. In this paper, we study the direction of these biases under the realistic assumptions that the concentrations of different types of air pollutants are positively correlated with each other and that each type of pollutant has a nonpositive association with the outcome variable. We demonstrate both theoretically and using simulations that under these two assumptions, the OVB will typically be negative and that more often than not the MEB for null pollutants or for pollutants that are perfectly measured will be negative. We also use a crop yield and air pollution dataset to show that these biases tend to be negative in the setting of our motivating application. We do this by introducing a validation scheme that does not require knowing the true coefficients. While this paper is motivated by studies assessing the effect of air pollutants on crop yields, the findings are also relevant to regression-based studies assessing the effect of air pollutants on human health outcomes. The validation scheme can also be used to empirically study OVB or MEB in other contexts. 

\end{abstract}

\begin{keyword}
\kwd{Air Pollution}
\kwd{Crop Yields}
\kwd{Omitted Variable Bias}
\kwd{Measurement Error Bias}
\end{keyword}

\end{frontmatter}

\section{Introduction} \label{sec:Intro}

Identifying and estimating the effect of ambient air pollutants on outcome variables such as crop yields or human health metrics is an important task from a policy perspective, as restrictions on emissions of air pollutants could lead to large improvements in these outcomes. To study the impact of air pollutants, researchers often must turn to observational approaches due to practical limitations in conducting randomized experiments under the conditions of interest. For example, due to the difficulty of controlling air pollutants in the open air environments where crops are typically found, the vast majority of experiments studying the effect of air pollutants on crop yields are conducted in greenhouses or chambers \citep{AinsworthNiceFACEversusChamberSidebar}. Meanwhile, due to ethical concerns, experiments assessing the effect of air pollutants on human health outcomes are limited to animal models or to trials where it is expected that the health consequences will be transient and reversible \citep{ControlledHumanInhalationEPA}.

Observational approaches to assess the causal effects of air pollutants on outcomes of interest also have major limitations. First, any claim about the causal effect of air pollutants using observational data relies either on an assumption that a particular variable is a valid instrument or on an assumption that there are no unmeasured confounders. Second, observational studies often rely on the assumption that the statistical models relating the variables are sufficiently well specified. Third, there is a sparse collection of ground truth measurements of air pollutant concentrations from air quality monitors, and in many locations where there are air quality monitors, only a fraction of the pollutants of interest are measured. In this paper, we focus on the consequences of the third issue, in settings where the investigator uses a linear regression model and where we assume the confounders are known, but the confounding pollutants are either sparsely measured or are measured with error.

In Figure \ref{fig:monitor_colocation}, we see that even in the United States, a country with a relatively large number of high quality air pollutant monitors, certain pollutants of interest are sparsely monitored, and even in locations where a pollutant of interest is monitored, the remaining pollutants of interest are often not monitored. If an investigator wishes to assess the effect of air pollutants on an outcome variable such as crop yield this leaves them with two natural choices. One option is to just include one air pollutant of interest in their model, so that sites where other air pollutants are not measured need not be dropped from the analysis. A second option is to use error-prone estimates of the air pollution concentrations based on satellite imagery or other sources and to treat these estimates as the true concentrations when fitting the statistical model. Indeed, these two options are often deployed in the pollutant and crop yield literature. For example, various recent linear-regression-based observational studies assessing the impact of air pollutants on crop yields either only include one air pollutant in all of their regression models \citep{McGrath_OzoneUS2015paper,LobellBurneyERL2021,Da2022} or relegate controlling for other pollutants to robustness checks \citep{ColumbiaOzone2019,YiEtAlOzonePM25MaizeChinaObs, LobellNOx2022}. Recent linear-regression-based studies assessing the association between air pollutants and crop yields have also used various types of error-prone estimates of pollutant concentrations such as inverse distance weighted monitor measurements \citep{ColumbiaOzone2019,YiEtAlOzonePM25MaizeChinaObs} satellite-based proxies \citep{LobellNOx2022,Da2022} satellite-based machine learning predictions \citep{HeEtAlOzoneMLpredsVsCropYieldsChina} or a data product based on chemical transport models, emissions inventories and meteorology observations \citep{Yi2016_OzoneChinaWheatObservational}.

\begin{figure}[t]
    \centering
    \includegraphics[width=0.95 \hsize]{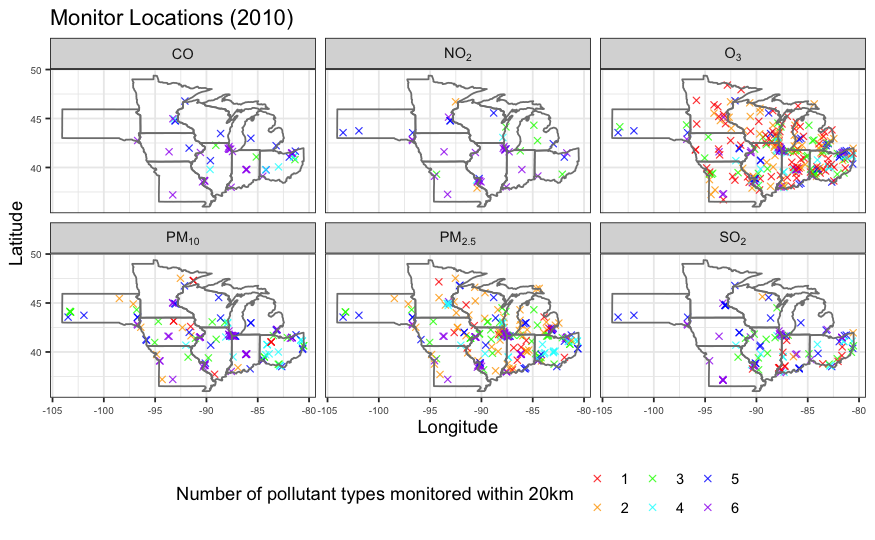}
    \caption{EPA air pollutant monitor locations in 2010 in the US Corn belt. Each panel presents the locations of air pollutant monitors for the air pollutant depicted in the panel title. The colors indicate how many of the 6 pollutants of interest ($\CO,\NitrogenDi,\Ozone,\PMBig,\PMSmall,$ and $\SulfurDi$) were also monitored within a 20 km radius of the plotted monitor.}
    \label{fig:monitor_colocation}
\end{figure}

Investigators should be cautioned that the first approach of only including one pollutant type in the model leads to omitted variable bias (OVB) while the second approach of plugging in error-prone proxies leads to a measurement error bias
(MEB). OVB and MEB are well known types of bias and the formulas are well established in the literature; however, less is understood about the direction of these biases because they depend on unknown parameters. Particularly little can be said about the MEB in settings where multiple covariates have measurement error and the bias formula does not lead to usable results, as noted in 
some leading econometrics textbooks (see footnote 2 of \cite{AbelCMESeveralRegressors}, referencing textbooks \cite{WooldridgeText} and \cite{GreeneEconemtricsTextbook}). \cite{RobertsAndWhitedChapter} even state that ``little research on the direction and magnitude of these inconsistencies [limiting biases] exists because biases in this case are typically unclear and complicated to derive" (p. 504). 

In this paper, we introduce two assumptions and study the directions of the OVB and MEB under these assumptions. Our two assumptions are both realistic and particular to the setting of observational studies assessing the association between air pollutants and human health or agricultural outcomes. The first assumption is that each pollutant has either a negative association with the outcome variable or no association with the outcome variable. The second assumption is that the pairwise correlations between each pair of relevant pollutants are all positive. Under these two assumptions, we show that the OVB tends to be negative. We also demonstrate that MEB can easily be negative. In particular, under certain stated conditions and the classical measurement error model, we prove that pollutant coefficient estimates suffer from a combination of attenuation bias (shrinkage of estimates towards zero) and negative additive bias. As a corollary, under these conditions, the coefficient estimates for null pollutants as well as those for pollutants that are perfectly measured will be negatively biased as a result of the other, correlated, error-prone pollutants. When loosening modelling assumptions about the measurement error structure in simulations, our findings regarding negative MEB still tend to hold. Finally in a validation scheme that uses a pollutant and crop yield dataset from our motivating application, we find that the MEB for perfectly measured pollutants still tends to be negative. Notably, our data-based validation imposes no assumptions about the covariance matrix of the data or about the measurement error structure.

Our theoretical findings about the direction of OVB are likely not novel and are intuitive to econometricians (see, for example Section 3.3 of \cite{CinelliAndHazletOmVarSign}); however, our results concerning the direction of MEB are, to our knowledge, novel. 
The scenarios we demonstrate of negative MEB for null pollutants are particularly concerning given that it is sometimes falsely assumed that measurement error always results in attenuation bias (see Myth 2 in \cite{MythsEpi5Paper}, a recent paper enumerating common myths about measurement error). Therefore, researchers may not only report negative effects for pollutants which do not actually affect the outcome, they may very well falsely believe that the estimated effects are conservative (i.e., biased towards zero). 

In the air pollution and crop yield literature, the direction and magnitude of the MEB is not often discussed. In two cases where it has been discussed, only one pollutant was being studied and attenuation bias was assumed \citep{Yi2016_OzoneChinaWheatObservational,ColumbiaOzone2019}. Another contribution of this paper is that we study the MEB using a dataset and regression model that are representative of those commonly used in the pollution and crop yield literature. We consider models where there is one perfectly measured pollutant of primary interest and another pollutant that is controlled for using an error-prone proxy. We found that the MEB for the coefficient of the pollutant of primary interest tended to be negative for the ordered pairs of pollutants considered. In addition, after converting the outcome variable to the log scale and standardizing pollutant concentrations to be in units of the exposure limits recommend by the World Health Organization (WHO), this MEB was roughly negative 1--5\% on average.

The organization of this paper is as follows. In Section \ref{sec:Setup}, we introduce the mathematical setup and notation and present general formulas for the OVB and MEB, with the derivations of these formulas presented in Appendix \ref{sec:formula_deriv}. In Section \ref{sec:Theory}, we provide precise assumptions under which the OVB and the MEB are guaranteed to be negative, with proofs of theoretical guarantees in Appendix \ref{sec:ProofOfTheorems}. Because some of the assumptions needed for theoretical guarantees about the direction of the MEB may not hold in practice, in Section \ref{sec:BiasSims}, we use simulations to show that these theoretical findings still tend to hold under violations of such assumptions. Our simulations also demonstrate that as the pairwise correlations between pollutants increase, the chances that the OVB and the MEB are negative increase. In Section~\ref{sec:ValidationOnDataset}, we investigate OVB and MEB using some crop yield data, sparsely sampled air pollutant concentrations, and widely available proxies for the air pollutant concentrations.  We are able to study these biases despite not knowing the true regression coefficients in any of the models. To study OVB we compare regression coefficients for a pollutant of interest when some other measured pollutant is either retained or omitted. A similar strategy that compares using either a measured pollutant versus its proxy lets us study the MEB. Our empirical results are consistent with our theoretical results and demonstrate that the OVB and MEB of the perfectly measured pollutants tend to be negative. In Section \ref{sec:Discussion}, we discuss the limitations and implications of our findings, recent methods for removing MEB, as well as methodological and dataset developments that are needed to rigorously remove MEB in regression-based studies assessing the impacts of ambient air pollutants. 

\section{Setup, notation and general bias formulas} \label{sec:Setup}

In this section, we introduce notation and formal assumptions for the sampling and linear regression setup. We then present formulas for OVB and MEB, and discuss how the two types of bias relate to each other. Our bias formulas are asymptotic, so throughout the text, we use the word `bias' to describe estimation error that does not disappear in the infinite sample size limit, rather than the typical definition involving expected values, although the two definitions coincide asymptotically.

We start by defining the random variables, covariance matrices, and estimands of interest. Let $Y$ denote the outcome variable, let $Z = \big(Z^{(1)},\dots,Z^{(p)} \big)  \in \mathbb{R}^p$ denote the vector of $p$ covariates which are measured without error, and let $X= \big(X^{(1)},\dots,X^{(d)} \big) \in \mathbb{R}^d$ denote the actual value of the $d$ covariates which are not precisely measured by their available estimates $\tx \in \mathbb{R}^d$. It is convenient to define the covariance matrix of the random vector $(Z,X,\tx)$ in the following block-wise manner,

$$ \cov \Bigg( \begin{bmatrix} Z \\ X  \\ \tx \end{bmatrix}  \Bigg) \equiv \begin{bmatrix} A & B & C \\ B^\tran & D & F  \\ C^\tran & F^\tran & G \end{bmatrix},  \text{  where  } A \in \mathbb{R}^{p \times p}, \ B,C \in \mathbb{R}^{p \times d},\text{ and } D,F,G \in \mathbb{R}^{d \times d}.$$ It also helps to define the covariance matrix of the available covariates $(Z,\tx)$ as well as the covariance matrix of $(Z,\tx)$ and its error as an approximation of $(Z,X)$ with 
\begin{align}\label{eq:thosecovs}
\begin{split}
\Sigma_M &\equiv \cov \Big( \begin{bmatrix} Z \\ \tx \end{bmatrix} \Big) =\begin{bmatrix} A & C \\ C^\tran & G \end{bmatrix} \quad \text{and} \quad \\ 
\Sigma_{M,E} &\equiv \cov \Big( \begin{bmatrix} Z \\ \tx \end{bmatrix} , \begin{bmatrix} Z-Z \\ X-\tx \end{bmatrix} \Big) =\begin{bmatrix} 0 & B-C \\ 0 & F^ \tran -G \end{bmatrix}.
\end{split}
\end{align} The estimands of interest are the following population regression coefficients \begin{equation}\label{eq:def_reg_coef}
    (\alpha,\beta_Z,\beta_X) \equiv \argmin_{a \in \mathbb{R}, b_Z \in \mathbb{R}^{p},b_X \in \mathbb{R}^{d}} \e \Big[ \frac{1}{2}\big(Y-a-Z^\tran b_Z -X^\tran b_X \big)^2 \Big], \quad \beta \equiv (\beta_Z,\beta_X).
\end{equation}

Suppose there are $N$ data samples with $(Y_i,Z_i,X_i,\tx_i) \in \mathbb{R}^{1+p+d+d}$ denoting the $i$th data sample for $i=1,\dots,N$. To study OVB and MEB, we assume these regularity conditions:

\begin{enumerate}[(i)]
    \item The $N$ samples $\big(Y_i,Z_i,X_i,\tx_i \big)_{i=1}^N$ are identically distributed and the first and second sample moments of $\big(Y_i,Z_i,X_i,\tx_i \big)_{i=1}^N$ converge in probability to the corresponding first and second population moments of $(Y,Z,X,\tx)$ as $N \to \infty$. Here, the second population moments of a random vector $V$, include all entries in $\e[VV^\tran]$, not just the entries on the main diagonal.
    
    \item Defining $\varepsilon_i \equiv Y_i- \alpha -Z_i^\tran \beta_Z -X_i^\tran \beta_X$ for each $i=1,\dots,N$,  $\varepsilon_i$ is uncorrelated with the measurement error $W_i-X_i$.
       
    \item The covariance matrix of $(Z,W)$ and the covariance matrix of $(Z,X)$ are both positive definite.
\end{enumerate}

We use Condition (i) rather than the stronger IID assumption typically seen in the measurement error literature, because we expect the IID assumption to not hold in our motivating applications due to be spatial correlations between the samples. Condition (ii) that assumes $\varepsilon_i$ is uncorrelated with the measurement error $W_i-X_i$ is common in the measurement error literature in order to obtain tractable results, although the precise assumption and assumption name vary in different texts. \cite{FongAndTyler2021} refer to the equivalent assumption that $\varepsilon_i$ is uncorrelated with both $\tx_i$ and $Z_i$ as the exclusion restriction assumption (the exclusion restriction assumption is equivalent to Condition (ii) because by definition of $\varepsilon_i$ and by~\eqref{eq:def_reg_coef}, $\varepsilon_i$ must be uncorrelated with $Z_i$ and $X_i$). The textbook \cite{CarrolStefanski06} often uses the related assumption that $W \indep Y |X,Z$, and refer to this assumption as the nondifferential measurement error assumption. Condition (iii) is likely to hold and is a standard assumption used when running a regression of $Y$ on $(Z,W)$ and on $(Z,X)$. Condition (iii) ensures that all covariance matrices that need to be inverted throughout the text are indeed invertible.

An investigator wishing to regress $Y$ on $(Z,X)$ to estimate $\beta_Z$ and $\beta_X$ will be impeded by the fact that they have few or no measurements of $X$. They may opt to simply regress $Y$ on $Z$ using OLS and not provide an estimate for $\beta_X$. Let $\hat{\beta}_Z^{\text{(OM)}}$ denote their estimator for $\beta_Z$. In Appendix~\ref{sec:ovbderivation} we show this estimator has bias given by
\begin{equation}\label{eq:OmVarBias}
    \bbzOm  \equiv \text{plim}_{N \to \infty} \big[ \hat{\beta}_Z^{\text{(OM)}} - \beta_Z \big] = A^{-1} B \beta_X.
\end{equation} 
Alternatively, the investigator may opt to regress $Y$ on $(Z,W)$ using OLS, and estimate $\beta_X$ with the estimated regression coefficients for $\tx$. In this case, their estimate of $\beta  = (\beta_Z, \beta_X)$, call it $\hat{\beta}^{\text{(ME)}} = (\hat{\beta}_Z^{\text{(ME)}},\hat{\beta}_X^{\text{(ME)}})$ will have asymptotic bias given by,
\begin{equation}\label{eq:MEbiasAll}
    \bbME \equiv \text{plim}_{N \to \infty} \big[ \hat{\beta}^{\text{(ME)}} - \beta \big] = \Sigma_M^{-1} \Sigma_{M,E} \beta,
\end{equation} as shown in Appendix~\ref{sec:mebderivation}.
That appendix also shows that the asymptotic bias in $\hat{\beta}_Z^{\text{(ME)}}$, their estimate of $\beta_Z$, will be
\begin{equation}\label{eq:MEbiasZ}
    \bbzME \equiv \text{plim}_{N \to \infty} \big[ \hat{\beta}_Z^{\text{(ME)}} - \beta_Z \big] = \big( A- C G^{-1} C^ \tran \big)^{-1} (B - C G^{-1} F^ \tran ) \beta_X.
\end{equation}   

Because researchers may be more accustomed to reasoning about OVB than MEB, and because a number of studies in the pollutant and crop yield literature skirt away from measurement error issues by omitting error-prone pollutants from the regression, it is natural to ask whether OVB is a strictly worse issue than MEB. In Appendix \ref{sec:OmBiasVsMEbias}, we use formulas \eqref{eq:OmVarBias} and \eqref{eq:MEbiasZ} to investigate  whether the OVB is worse than the MEB. In summary, the results of Appendix \ref{sec:OmBiasVsMEbias} suggest that the entries of $\bbzOm$ will tend to be further away from $0$ than the corresponding entries of $\bbzME$, but this is not always the case, even under strong assumptions about the measurement error structure. The results for the cases we considered are summarized in Table \ref{table:OMbiasVsMEbias}, with subsequent subsections in Appendix \ref{sec:OmBiasVsMEbias} providing proofs of the results presented in the table. Therefore, while we believe including error-prone covariates in an OLS regression will, more often than not, reduce bias compared to omitting those covariates from the regression altogether, we also show that there are exceptions.

\section{Theoretical investigation of direction of biases}\label{sec:Theory}

In our motivating application, $Y$ is the crop yield, $Z$ is a vector of the covariates that are assumed to be measured without error, which includes weather covariates and covariates controlling for space and time. $X$ is the vector of actual pollutant concentrations, and $\tx$ is the vector of estimated pollutant concentrations. Pollutants that are perfectly measured could be included in the vector $Z$ or in the vector $X$, but not in both. When studying OVB, we choose to include any perfectly measured pollutants in $Z$. 

There are two assumptions that are reasonable to make in the setting of studying the association of air pollutant concentrations with crop yields: 
\begin{enumerate}[(I)]
    \item \textit{No Benefit Assumption:} The regression coefficients corresponding to the pollutants are nonpositive. 
    \item \textit{Positive, Pairwise Pollutant Correlations (Pairwise $\PC$) Assumption:} The concentrations of the various pollutants are positively correlated with each other.
\end{enumerate} The No Benefit Assumption cannot be checked empirically but it is reasonable to assume that each air pollutant type does not have a net benefit on crop yields. The Pairwise $\PC$ Assumption can be checked empirically. For example, in Figure \ref{fig:PairwiseCorr}, we find that this assumption holds in the Midwestern United States.

\begin{figure}[!t]
    \centering
    \includegraphics[width=0.95 \hsize]{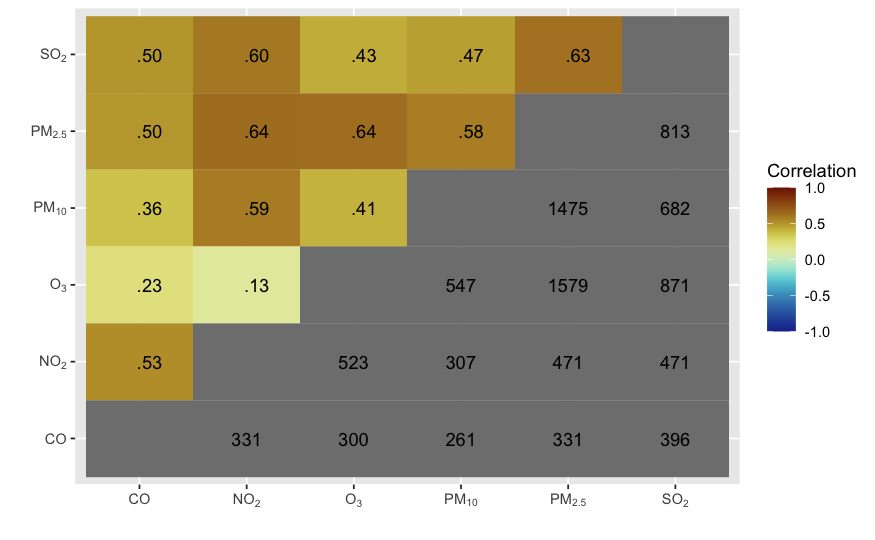}
    \caption{Pairwise correlations between pollutant concentrations and sample sizes used to compute them. The numbers in the top left are the sample pairwise correlations between each pair of pollutants. Pollutant concentration data was taken from the EPA's AQS monitors from the years 2002--2019 and the daily concentration data was transformed to yearly growing season averages (June--August). The numbers on the bottom right give the number of unique site-year pairs that were used to calculate the sample correlations. Many site-year pairs did not have measurements for all six pollutants, so missing values were dropped when computing each pairwise correlation.}

    \label{fig:PairwiseCorr}
\end{figure}

We will show that under these two assumptions,  the following four phenomena 
\textit{tend} to occur:
\begin{itemize}
    \item If the investigator does not include error-prone pollutants in the regression model, but includes a single pollutant that is perfectly measured, then

    \begin{enumerate}[\quad 1)]
        \item the OVB for the coefficient of the perfectly measured pollutant is negative.
    \end{enumerate}

    \item  If instead the investigator includes error-prone pollutants in the regression model, then
     \begin{enumerate}[\quad 1)]
        \skipitems{1}
        \item  pollutants that are perfectly measured have negative MEB,
        \item  pollutants that are null and have no association with crop yield have negative MEB,
        \item  and coefficient estimates for nonull, error-prone pollutants are subject to a combination of attenuation bias and negative additive bias.
\end{enumerate}
\end{itemize}

While phenomena 1--4 do not always occur under Assumptions (I) and (II), we show that they each \textit{tend} to occur, using theoretical arguments in Sections \ref{sec:OmittedVarBiasGeneralConditions} and \ref{sec:MEbias_genConditions}, using simulations in Section \ref{sec:BiasSims}, and using a real dataset in Section \ref{sec:ValidationOnDataset} (phenomena 1 and 2 only). 
We are also interested in and study a fifth phenomenon that tends to happen
\begin{enumerate}[\quad1)]
\skipitems{4}
\item for the coefficient estimates of perfectly measured pollutants, the OVB is worse than the MEB in having
a larger absolute value.
\end{enumerate}
In Appendix~\ref{sec:OmBiasVsMEbias} we give some conditions where phenomenon 5 must hold and some conditions where it need not hold. We also see that it tends to hold in our simulations (Section~\ref{sec:BiasSims}).

Our theoretical results for the OVB and MEB, respectively, require the following modifications of the Pairwise $\PC$ Assumption:

\begin{enumerate}[(I')]
 \skipitems{1}
    \item \textit{Weak Positive Pollutant Partial Correlation (Weak 
 $\PPC$) Assumption:} For each $k \in \{1,\ldots,p \}$ such that $Z^{(k)}$ is a pollutant, and for each $j \in \{1,\ldots,d \}$, the partial correlation of the pollutants $Z^{(k)}$ and $X^{(j)}$ conditional on $Z  \setminus \{ Z^{(k)} \}$ (i.e., conditional on all measurement error free controls in the model besides $Z^{(k)}$) is positive.
\end{enumerate}

\begin{enumerate}[(I*)]
 \skipitems{1}
    \item \textit{Positive, Pairwise, Pollutant Partial Correlations (Pairwise $\PPC$) Assumption:} For each pair of pollutants, the  partial correlation between those two pollutants, conditional on all remaining entries of $(Z,X)$, is positive. 
\end{enumerate} Heuristically, in the simplified setting of only one perfectly measured, non-omitted pollutant, the Weak $\PPC$ Assumption supposes that the concentration of that pollutant has positive associations with the concentrations of each of the omitted pollutants when controlling for all of the non-pollutant covariates in the model (such as weather and space-time controls). The Pairwise $\PPC$ Assumption requires that for each pair of pollutants, the two pollutants have concentrations that are positively associated with eachother when controlling for all other covariates in the model (including concentrations of other pollutants). This assumption is generally more stringent than the Pairwise $\PC$ assumption because even when two variables are clearly positively correlated, the positive association between the variables may not hold when controlling for other covariates.

To prove theoretical results about the MEB, we must also make assumptions about the measurement error structure. For mathematical tractability, we use the following assumption

\begin{enumerate}[(I)]
 \skipitems{2}
    \item \textit{Classical, Uncorrelated Measurement Error (CUME) Assumption:} The measurement error is classical and is uncorrelated across the pollutants. That is $W=X+E$ with $E \indep (Z,X,Y)$, and $\Sigma_E \equiv \cov(E,E)=\text{diag}(a_1,\dots,a_d)$ where $a_j \geq 0$ for $j=1,\dots,d$.
\end{enumerate} The classical measurement error model in this assumption is common in the measurement error literature \citep{CarrolStefanski06}; however, the assumption that the measurement errors are uncorrelated across pollutants is more stringent than the classical measurement error assumption and is needed for the proof of our theoretical results about the MEB. Heuristically, the CUME Assumption states that the measurement error is simply additive noise that is uncorrelated with any of the variables of interest and that the measurement error for one pollutant is uncorrelated with that for the other pollutants.

To demonstrate that phenomena 1--4 each tend to occur under the No Benefit and Pairwise $\PC$ Assumptions, in Section \ref{sec:OmittedVarBiasGeneralConditions} we prove phenomenon 1 \textit{must} occur under the No Benefit Assumption and the related Weak $\PPC$ Assumptions. Then in Section \ref{sec:MEbias_genConditions} we prove that under the No Benefit, Pairwise $\PPC$, and CUME Assumptions, phenomena 2--4 \textit{must} occur. In Section \ref{sec:BiasSims}, we use simulations to demonstrate that phenomena 1--4 tend to hold under the No Benefit and Pairwise $\PC$ Assumptions for a more general class of correlation structures and measurement error models than those considered in Sections \ref{sec:OmittedVarBiasGeneralConditions} and \ref{sec:MEbias_genConditions}. Finally, in Section \ref{sec:ValidationOnDataset}, we demonstrate that phenomena 1 and 2 tend to occur using a real dataset from our motivating application without imposing any assumptions about the correlation structures or measurement error in the dataset.

\subsection{A theoretical guarantee for negative OVB}\label{sec:OmittedVarBiasGeneralConditions}

Here, we suppose that a pollutant is measured without error and is included in the covariate vector $Z$, as the $k$th entry in $Z=(Z^{(1)},\ldots,Z^{(p)})$, and the vector $X$ contains the error-prone pollutants which are to be omitted from the regression. In this setting we have the following proposition.

\begin{proposition}\label{prop:OmBiasNegConds}
Suppose Regularity Conditions (i)-(iii) hold. Under the No Benefit and Weak $\PPC$ Assumptions, $[\emph{bias}^{\emph{(OM)}}_{\beta_Z}]_k \leq 0$.
\end{proposition}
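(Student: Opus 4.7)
The plan is to start from the closed form $\bbzOm = A^{-1} B \beta_X$ derived in Appendix~\ref{sec:formula_deriv}, and to show termwise that the $k$th entry is nonpositive by combining the sign of $\beta_X$ (from the No Benefit Assumption) with the sign of the entries in the $k$th row of $A^{-1} B$ (from the Weak $\PPC$ Assumption).

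First I would write
\begin{equation*}
\bigl[\bbzOm\bigr]_k \;=\; e_k^\tran A^{-1} B \beta_X \;=\; \sum_{j=1}^{d} \bigl[A^{-1} B\bigr]_{k,j}\,[\beta_X]_j,
\end{equation*}
where $e_k$ is the $k$th standard basis vector in $\mathbb{R}^p$. The No Benefit Assumption immediately gives $[\beta_X]_j \le 0$ for each $j$, so it suffices to show that $[A^{-1} B]_{k,j} \ge 0$ for every $j \in \{1,\dots,d\}$.

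The key identification I would exploit is that the $j$th column of $A^{-1} B$ is the vector of population OLS coefficients in the linear regression of $X^{(j)}$ on $Z$; this is because $A = \cov(Z)$ and the $j$th column of $B$ is $\cov(Z, X^{(j)})$. Consequently $[A^{-1} B]_{k,j}$ is the population coefficient on $Z^{(k)}$ in the regression of $X^{(j)}$ on all of $Z$. A standard identity then relates this coefficient to the partial correlation: writing $\tilde Z^{(k)}$ and $\tilde X^{(j)}$ for the residuals after linearly regressing $Z^{(k)}$ and $X^{(j)}$, respectively, on $Z \setminus \{Z^{(k)}\}$, the Frisch--Waugh--Lovell theorem gives
\begin{equation*}
\bigl[A^{-1} B\bigr]_{k,j} \;=\; \frac{\cov\bigl(\tilde X^{(j)}, \tilde Z^{(k)}\bigr)}{\var\bigl(\tilde Z^{(k)}\bigr)} \;=\; \frac{\mathrm{sd}\bigl(\tilde X^{(j)}\bigr)}{\mathrm{sd}\bigl(\tilde Z^{(k)}\bigr)}\,\corr\bigl(\tilde X^{(j)}, \tilde Z^{(k)}\bigr),
\end{equation*}
and the last correlation is precisely the partial correlation of $X^{(j)}$ and $Z^{(k)}$ given $Z \setminus \{Z^{(k)}\}$. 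The positive-definiteness asserted in Regularity Condition (iii) ensures that $\var(\tilde Z^{(k)}) > 0$, so the prefactor is strictly positive and the sign of $[A^{-1} B]_{k,j}$ matches the sign of the partial correlation. Since $Z^{(k)}$ is a pollutant by hypothesis, the Weak $\PPC$ Assumption delivers positivity of that partial correlation, hence $[A^{-1} B]_{k,j} \ge 0$.

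Combining these two sign statements, every summand in the displayed expansion of $[\bbzOm]_k$ is a product of a nonnegative factor and a nonpositive factor, so the sum is nonpositive. The main obstacle, if any, is simply citing the right identity linking the OLS coefficient to the partial correlation; everything else is a sign-counting argument once the formula from Appendix~\ref{sec:formula_deriv} is in hand. Note also that the argument does not use the entries of $\beta_Z$ at all and treats pollutant and non-pollutant entries of $Z$ symmetrically, so the Weak $\PPC$ hypothesis only needs to be invoked at the single index $k$ corresponding to the pollutant whose OVB is being signed.
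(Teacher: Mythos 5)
Your proposal is correct, and its skeleton is the same as the paper's: expand $[\bbzOm]_k=\sum_{j=1}^d [A^{-1}B]_{k,j}\,\beta_{X,j}$, use the No Benefit Assumption to sign $\beta_{X,j}\le 0$, and reduce everything to showing $[A^{-1}B]_{k,j}\ge 0$ via the sign of the partial correlation of $Z^{(k)}$ and $X^{(j)}$ given $Z\setminus\{Z^{(k)}\}$. Where you differ is in how that key sign fact is established. The paper works with the joint covariance matrix $M$ of $(Z,X^{(j)})$, inverts it blockwise, applies Sherman--Morrison, and then uses the precision-matrix identity $\corr(V^{(\ell)},V^{(r)}\mid V\setminus\{V^{(\ell)},V^{(r)}\})=-[\Sigma^{-1}]_{\ell r}/\sqrt{[\Sigma^{-1}]_{\ell\ell}[\Sigma^{-1}]_{rr}}$, together with Schur-complement positivity from Condition (iii), to conclude $\bm{e}_k^\tran A^{-1}\bm{v}_j\ge 0$. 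You instead identify the $j$th column of $A^{-1}B$ as the population regression coefficients of $X^{(j)}$ on $Z$ and invoke the population Frisch--Waugh--Lovell identity, so that $[A^{-1}B]_{k,j}$ equals a strictly positive ratio of standard deviations times exactly the residual-based partial correlation appearing in the Weak $\PPC$ Assumption; Condition (iii) guarantees $\var(\tilde Z^{(k)})>0$ (and also $\var(\tilde X^{(j)})>0$, so the correlation is well defined). Your route is more elementary and arguably more transparent, since it bypasses the Sherman--Morrison computation and makes the regression interpretation of $A^{-1}B$ explicit; the paper's matrix-algebra route has the side benefit of rehearsing the precision-matrix formula for partial correlations, which it reuses in the proof of Theorem~\ref{thoerem:PollutantBiasOmegaSigns}. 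Your closing observation that the Weak $\PPC$ Assumption is needed only at the single index $k$ is also consistent with the paper's argument, which (after taking $k=p$ without loss of generality) only uses the partial correlations involving $Z^{(p)}$.
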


\begin{proof}
See Appendix \ref{sec:ProofOfTheorems} .  
\end{proof}

\subsection{A theoretical guarantee for negative MEB}\label{sec:MEbias_genConditions}

In this subsection, we suppose without loss of generality that 
no pollutant measurements are
contained in the vector $Z$, and that the vector $X$ contains only pollutants (and if the $j$th pollutant has no measurement error $W^{(j)}=X^{(j)}$). 
We will prove the following theorem, which will then be used to show that phenomena 2--4 are guaranteed to occur under the No Benefit, Pairwise $\PPC$, and CUME Assumptions.

\begin{theorem}\label{thoerem:PollutantBiasOmegaSigns} Suppose Regularity Conditions (i)-(iii) hold and that the Pairwise $\PPC$ and CUME Assumptions also hold. Then, \begin{equation}\label{eq:MEbiasAssumption}
      \emph{\text{bias}}^{\emph{(ME)}}_{\beta_X} = - \Omega \Sigma_E \beta_X \quad \text{ where }  \quad \Omega \equiv \big( \Sigma_E + D -C^ \tran A^{-1} C \big)^{-1}.
\end{equation}
    Furthermore, for all $j,j' \in \{1,\ldots,d \}$,  $\Omega_{jj} \in (0,1/a_j]$ and $\Omega_{jj'} \leq 0$ if $j \neq j'$.

\end{theorem}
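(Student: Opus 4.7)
The plan is to split the proof into two pieces: first, specialize the general MEB formula \eqref{eq:MEbiasAll} under the CUME assumption to obtain $\text{bias}^{(ME)}_{\beta_X} = -\Omega \Sigma_E \beta_X$; second, extract the sign structure of $\Omega$ by combining Woodbury's identity with the classical fact that a symmetric positive definite matrix with non-positive off-diagonals (a Stieltjes matrix) has a non-negative inverse.

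For the first piece, under CUME we have $E \indep (Z,X)$, so $C = \cov(Z,W) = B$, $F = \cov(X,W) = D$, and $G = \cov(W) = D + \Sigma_E$. Substituting into \eqref{eq:thosecovs}, the top-right block $B - C$ vanishes and the bottom-right block equals $F^\tran - G = -\Sigma_E$, so $\Sigma_{M,E}\beta$ has top block $0$ and bottom block $-\Sigma_E \beta_X$. Applying block inversion to $\Sigma_M$, its bottom-right block is $\Omega = (D + \Sigma_E - C^\tran A^{-1} C)^{-1}$, and \eqref{eq:MEbiasAll} then yields $\text{bias}^{(ME)}_{\beta_X} = -\Omega \Sigma_E \beta_X$ as claimed.

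For the second piece, set $M' = D - C^\tran A^{-1} C$; this is the Schur complement of $A$ in the covariance matrix of $(Z,X)$, so $(M')^{-1}$ equals the lower-right block of the precision matrix of $(Z,X)$. By the standard identity relating partial correlations to entries of the precision matrix, the Pairwise $\PPC$ Assumption gives $K_{jj'} < 0$ for all $j \neq j'$, where $K := (M')^{-1}$. Since $M'$ is a positive definite conditional covariance, $K$ is also positive definite; combined with the off-diagonal sign pattern, $K$ is a Stieltjes matrix. Assuming first that $\Sigma_E$ is invertible, Woodbury's identity gives
\begin{equation*}
\Omega = (K^{-1} + \Sigma_E)^{-1} = \Sigma_E^{-1} - \Sigma_E^{-1}(K + \Sigma_E^{-1})^{-1}\Sigma_E^{-1}.
\end{equation*}
The matrix $K + \Sigma_E^{-1}$ shares the off-diagonal signs of $K$ and remains symmetric positive definite, so it too is Stieltjes; the classical fact that every Stieltjes matrix has a non-negative inverse (see, e.g., Berman and Plemmons, \emph{Nonnegative Matrices in the Mathematical Sciences}) then gives $Q := (K + \Sigma_E^{-1})^{-1} \geq 0$ entrywise. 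Reading off entries, $\Omega_{jj'} = -Q_{jj'}/(a_j a_{j'}) \leq 0$ for $j \neq j'$ and $\Omega_{jj} = 1/a_j - Q_{jj}/a_j^2 \leq 1/a_j$, while $\Omega_{jj} > 0$ follows from $\Omega$ being positive definite.

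The degenerate case where some $a_j = 0$ is handled by replacing $\Sigma_E$ with $\Sigma_E + \varepsilon I$, applying the above, and letting $\varepsilon \downarrow 0$; continuity of matrix inversion transfers the sign conclusions, and the bound $\Omega_{jj} \leq 1/a_j = +\infty$ is vacuous in that case. I expect the main obstacle to be the second piece, specifically translating the Pairwise $\PPC$ Assumption into the Stieltjes property of $K$ and arranging the Woodbury manipulation so that inverse-positivity of an M-matrix is what actually supplies the sign — a direct sign analysis of $(M' + \Sigma_E)^{-1}$ without first rewriting it via Stieltjes structure seems unlikely to work.
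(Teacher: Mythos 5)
Your proposal is correct and follows essentially the same route as the paper's proof: specialize \eqref{eq:MEbiasAll} under CUME via blockwise inversion, then use the Woodbury identity together with the precision-matrix characterization of partial correlations and inverse-positivity of symmetric $M$-matrices (Stieltjes matrices) to get the sign pattern of $\Omega$, finishing the singular-$\Sigma_E$ case by a perturbation limit. The only differences are cosmetic (you invoke the Stieltjes property of $K+\Sigma_E^{-1}$ where the paper cites Plemmons' result for a nonnegative diagonal plus an $M$-matrix, and in the limiting step you should note that for indices with $a_j>0$ the bound $\Omega_{jj}\le 1/a_j$ is obtained from $1/(a_j+\varepsilon)\to 1/a_j$ rather than being vacuous), so the argument matches the paper's in substance.
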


\begin{proof} 

See Appendix \ref{sec:ProofOfTheorems}. \qedhere

\end{proof}

 An interesting implication of Theorem \ref{thoerem:PollutantBiasOmegaSigns} is that under the No Benefit, Pairwise $\PPC$, and CUME Assumptions, the MEB for a particular pollutant can be written as the sum of an attenuation bias term and an additive nonpositive bias term.  In particular, recall that $\Sigma_E = \text{diag}(a_1,\ldots,a_d)$. By applying Theorem \ref{thoerem:PollutantBiasOmegaSigns}, it follows that in the setting of Theorem \ref{thoerem:PollutantBiasOmegaSigns} and under the No Benefit assumption, 
 \begin{equation}\label{eq:MEBiasDecomp}
    \Big[ \bbxME \Big]_j =  -\Omega_{jj} a_j \beta_{X,j} +
   \sum_{ \substack{  j' \in \{1,\ldots,d\} \\ j' \neq j} }- \Omega_{jj'} a_{j'} \beta_{X,j'} \quad \text{for } j=1,\ldots,d,
\end{equation} 
where the first term is an attenuation bias term, while second term is nonpositive. To see that the first term is an attenuation bias term, note that by Theorem \ref{thoerem:PollutantBiasOmegaSigns}, $\Omega_{jj} \in (0, 1/a_j]$, implying $0  \leq 1-\Omega_{jj} a_j < 1$. Further observe that if you take an unbiased estimator of $\beta_{X,j}$ and shrink it towards zero by a factor of $(1-\Omega_{jj} a_j)$, that shrunken estimator of $\beta_{X,j}$ will have bias $-\Omega_{jj} a_j \beta_{X,j}$. Hence the first term in \eqref{eq:MEBiasDecomp} can be thought of as an attenuation bias term. To see that the second term in \eqref{eq:MEBiasDecomp} is nonpositive, recall that $a_{j'} \geq 0$, $\beta_{X,j'} \leq 0$ (by the No Benefit Assumption), and $\Omega_{jj'} \leq 0$ for $j' \neq j$ (by Theorem \ref{thoerem:PollutantBiasOmegaSigns}).

The decomposition of the MEB into an attenuation bias term and a strictly nonpositive term in \eqref{eq:MEBiasDecomp} leads to two notable corollaries. 

\begin{corollary}\label{cor:NegBiasNoME}
In the setting of Theorem \ref{thoerem:PollutantBiasOmegaSigns} and under the No Benefit Assumption, if the $j$th entry of $X$ is perfectly measured, then $\big[ \emph{\text{bias}}_{\beta_X}^{\emph{\text{(ME)}}} \big]_j \leq 0$.
\end{corollary}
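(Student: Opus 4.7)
My plan is to leverage the decomposition in~\eqref{eq:MEBiasDecomp}, which already expresses $\big[\bbxME\big]_j$ as the sum of an ``attenuation'' term $-\Omega_{jj} a_j \beta_{X,j}$ and a cross-term $\sum_{j' \neq j} -\Omega_{jj'} a_{j'} \beta_{X,j'}$, and then to show that under the added hypothesis that the $j$th pollutant is perfectly measured, both pieces are nonpositive. First I would dispense with the attenuation term: under the CUME Assumption the measurement error $E^{(j)}$ is independent additive noise with variance $a_j$, so saying that the $j$th coordinate of $X$ is perfectly measured amounts to $a_j = 0$. This immediately forces $-\Omega_{jj} a_j \beta_{X,j} = 0$ regardless of the sign of $\beta_{X,j}$ or the size of $\Omega_{jj}$, so no further bookkeeping is needed on the diagonal.

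Next I would handle the cross-term summand-by-summand. For each $j' \neq j$, Theorem~\ref{thoerem:PollutantBiasOmegaSigns} yields $\Omega_{jj'} \leq 0$, hence $-\Omega_{jj'} \geq 0$; the CUME Assumption supplies $a_{j'} \geq 0$; and the No Benefit Assumption supplies $\beta_{X,j'} \leq 0$. Thus each product $-\Omega_{jj'} a_{j'} \beta_{X,j'}$ is (nonnegative)$\,\cdot\,$(nonnegative)$\,\cdot\,$(nonpositive), hence nonpositive, and summing over $j'$ preserves this sign. Combining the two terms gives $\big[\bbxME\big]_j \leq 0$, which is the claim.

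I do not anticipate any real obstacle. All of the substantive work is already packaged in Theorem~\ref{thoerem:PollutantBiasOmegaSigns}, specifically the off-diagonal sign statement $\Omega_{jj'} \leq 0$ for $j' \neq j$ (the diagonal bound $\Omega_{jj} \in (0, 1/a_j]$ is not actually needed here, since the diagonal term is killed by $a_j = 0$). Once that theorem is granted, the corollary reduces to a short sign-counting argument combined with the observation that perfect measurement of the $j$th coordinate is exactly the condition $a_j = 0$ in the CUME parameterization.
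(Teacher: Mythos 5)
Your proposal is correct and matches the paper's own proof essentially verbatim: perfect measurement of $X^{(j)}$ means $a_j = 0$, killing the attenuation term in~\eqref{eq:MEBiasDecomp}, while the cross-term is nonpositive by the sign facts $\Omega_{jj'} \leq 0$, $a_{j'} \geq 0$, and $\beta_{X,j'} \leq 0$. No gaps.
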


\begin{proof}
    If $X^{(j)}$ does not have measurement error, then $a_j=0$. Thus in Equation \eqref{eq:MEBiasDecomp}, the first term is zero, while as argued previously, the second term is nonpositive.
\end{proof}

An implication of this corollary is that the regression coefficients for perfectly measured pollutants could easily be downward biased due to the measurement error associated with other, correlated pollutants in the model.

\begin{corollary}\label{cor:NegBiasNulPol}
In the setting of Theorem \ref{thoerem:PollutantBiasOmegaSigns} and under the No Benefit Assumption, if the regression coefficient corresponding to the $j$th entry of $X$ is zero, then $\big[ \emph{\text{bias}}_{\beta_X}^{\emph{\text{(ME)}}} \big]_j \leq 0$.
\end{corollary}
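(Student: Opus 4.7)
The plan is to derive this corollary as an immediate consequence of the decomposition in equation~\eqref{eq:MEBiasDecomp}, which was itself derived from Theorem~\ref{thoerem:PollutantBiasOmegaSigns}. First, I would recall that under the hypotheses of Theorem~\ref{thoerem:PollutantBiasOmegaSigns} (Regularity Conditions (i)--(iii), the Pairwise $\PPC$ Assumption, and the CUME Assumption), together with the No Benefit Assumption, the $j$th component of the measurement error bias admits the split
\[
\Big[ \bbxME \Big]_j = -\Omega_{jj}\, a_j\, \beta_{X,j} \; + \; \sum_{\substack{j' \in \{1,\ldots,d\} \\ j' \neq j}} -\Omega_{jj'}\, a_{j'}\, \beta_{X,j'}.
\]

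Next, I would substitute the hypothesis $\beta_{X,j}=0$. This causes the first (attenuation) term to vanish identically, so the bias reduces to the cross-pollutant sum
\[
\Big[ \bbxME \Big]_j = \sum_{j' \neq j} -\Omega_{jj'}\, a_{j'}\, \beta_{X,j'}.
\]
I would then argue termwise that each summand is nonpositive. By Theorem~\ref{thoerem:PollutantBiasOmegaSigns}, $\Omega_{jj'} \leq 0$ for $j' \neq j$, so $-\Omega_{jj'} \geq 0$. By the CUME Assumption, the error variances satisfy $a_{j'} \geq 0$. By the No Benefit Assumption, $\beta_{X,j'} \leq 0$. Multiplying three factors with the sign pattern (nonnegative)(nonnegative)(nonpositive) yields a nonpositive quantity, and summing nonpositive quantities preserves the inequality.

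There is essentially no obstacle here: the heavy lifting was already completed in Theorem~\ref{thoerem:PollutantBiasOmegaSigns}, which established both the formula for $\Omega$ and the crucial sign pattern of its off-diagonal entries. The only thing worth flagging is that this argument mirrors exactly the one used in Corollary~\ref{cor:NegBiasNoME}, with the roles of the two terms swapped: there the second term was killed by $a_j=0$ and the first term was nonpositive; here the first term is killed by $\beta_{X,j}=0$ and the second term is nonpositive. Consequently the proof can be written in one or two sentences, simply citing equation~\eqref{eq:MEBiasDecomp}, setting $\beta_{X,j}=0$, and invoking the sign conclusions of Theorem~\ref{thoerem:PollutantBiasOmegaSigns} together with the No Benefit Assumption.
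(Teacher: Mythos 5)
Your proposal is correct and follows essentially the same route as the paper: invoke the decomposition in Equation~\eqref{eq:MEBiasDecomp}, note that $\beta_{X,j}=0$ kills the attenuation term, and conclude nonpositivity of the remaining sum from $\Omega_{jj'}\leq 0$, $a_{j'}\geq 0$, and the No Benefit Assumption. The termwise sign argument you spell out is exactly the one the paper relies on when it says the second term ``is nonpositive, as argued earlier.''
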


\begin{proof}
    If the regression coefficient corresponding to $X^{(j)}$ is zero, $\beta_{X,j}=0$. Thus in Equation \eqref{eq:MEBiasDecomp}, the first term is zero, while as argued earlier, the second term is nonpositive.
\end{proof}

A notable implication of Corollary \ref{cor:NegBiasNulPol} is that, under the No Benefit, Pairwise $\PPC$, and CUME Assumptions, for null pollutants that have no association with the outcome variable, the bias due to measurement error must be negative or zero.

\section{Simulations}\label{sec:BiasSims}

While we have demonstrated phenomenon 1 always holds under the No Benefit and Weak $\PPC$ Assumptions and that phenomena 2--4 always hold under the No Benefit, Pairwise $\PPC$, and Classical, Uncorrelated Measurement Error (CUME) Assumptions, some of the assumptions may fail to hold in our motivating applications. In particular, the assumption that the measurement errors are uncorrelated across pollutants could easily fail to hold. For example, if the error-prone pollutant estimates $W^{(1)},\ldots,W^{(d)}$ are predictions from machine learning models that each use the same input features, we expect the measurement errors to be correlated across pollutants.
Further the Pairwise $\PPC$ Assumption requiring nonnegative partial correlations between each pair of pollutants conditional on all other variables in the model may not hold. For example, in certain settings declines in $\PMSmall$ or nitrogen oxides can lead to increases in $\Ozone$ \citep{PartialCorrNegOzoneExample1,PartialCorrNegOzoneExample2}.

In light of some of the assumptions not holding up in our motivating applications, we conduct one million Monte Carlo simulations to explore whether phenomena 1--4 still tend to hold outside the conditions where we have seen that they must hold. In each simulation we supposed $p=5$ covariates were not error-prone, and $d=5$ covariates were error-prone. We supposed that the last entry of $Z$ was a pollutant that was perfectly measured.

In each simulation we first generated a random covariance matrix of $(Z,X,W)$, in which the covariance matrix of $(Z,X)$ will not necessarily satisfy assumptions from Section~\ref{sec:Theory} about the partial correlations between pollutants, and in which the measurement error $W-X$ is not of the classical type and is correlated across pollutants. Details and justification for how we randomly generated covariance matrices of $(Z,X,W)$ are provided in Appendix \ref{sec:MCSimsRandomCovMatricesDetails}.

The simulated regression coefficients satisfied the No Benefit Assumption. 
In particular, we set $\beta_{X,4}=\beta_{X,5}=0$ so that the final two pollutants would be null, and the bias for null pollutants could easily be studied. We generated negative coefficients corresponding to the other pollutants (i.e., $\beta_{Z,5},\beta_{X,1},\beta_{X,2},$ and $\beta_{X,3}$).
Their absolute values were IID draws from 
a Gamma distribution chosen to fit 
the $16$ coefficient estimates in Figure 2 of \cite{LobellBurneyERL2021}. 
Using the method of moments, the Gamma shape parameter was $1.4$
and the rate parameter was $1.6$.

After generating the covariance matrix of $(Z,X,W)$ and the entries of $\beta$ corresponding to the pollutants, we computed the OVB and MEB from formulas \eqref{eq:OmVarBias} and \eqref{eq:MEbiasAll}. 
We checked whether phenemona 1--5 occurred by checking whether $[\bbzOm]_5 \leq 0$, $[\bbME]_5 \leq~0$, $[\bbME]_{10}~\leq 0$, $[\bbME]_{6} \leq 0$ and $|[\bbzOm]_5|>|[\bbME]_5|$, respectively. 
To see whether these phenomena are more or less likely under certain correlation structures of $(Z,X)$, we used the generated covariance matrix of $(Z,X,W)$ to calculate the average pairwise correlation between the pollutants and to check whether the assumptions from Section \ref{sec:Theory} on the correlations and partial correlations between pollutants held.

We repeated this process $10^6$ times with results summarized in Table \ref{table:BiasSignSims} and Figure \ref{fig:BiasSignSims}. Table \ref{table:BiasSignSims} shows that among the simulations in which the pairwise correlations between pollutants are all positive, the OVB and the MEB for the perfectly measured pollutant both tend to be negative, while for pollutants measured with error, the MEB only tends to be negative for null pollutants. The OVB was observed to be negative more frequently than the MEBs were. When Weak $\PPC$ was met, we observed the phenomena to occur with similar frequency except notably, the OVB was always nonpositive as guaranteed by Proposition \ref{prop:OmBiasNegConds}. We also see that the OVB tended to be worse than the MEB, corroborating the findings in Appendix \ref{sec:OmBiasVsMEbias}.

\begin{table}[!t]
\caption{ 
Simulated frequencies of phenomena 1--5.
Rows 1--5 show frequencies of $[\bbzOm]_5 \leq 0$, $[\bbME]_5 \leq 0$, $[\bbME]_{10} \leq 0$, $[\bbME]_{6} \leq 0$ and $\big|[\bbzOm]_5 \big| > \big| [\bbME]_5  \big|$, respectively. 
The first column is for all $1{,}000{,}000$ simulations.
The second column is for 
$63{,}178$ simulations in which the Pairwise $\PC$ Assumption held. The third column 
is for the $220{,}119$ simulations in which the Weak $\PPC$ Assumption held.
All standard errors for the percentages displayed below are $<0.2\%$. We do not include a column for which the Pairwise $\PPC$ Assumption is met, because it was only met in 4 out of the 1 million Monte Carlo simulations that were run. 
}
\label{table:BiasSignSims}
\centering

\begin{tabular}{c l c c c } 
\toprule
  Phenomenon & Description & All sim.s  & Pairwise $\PC$  & Weak  $\PPC$ \\ 
\midrule
1&  OVB $<0$ & 73.9\%  & 93.4\% & 100.0\% \\ [1ex] 
\midrule
2& Non-error-prone pollutant MEB $<0$  & 66.3\% & 78.3\% & 88.0\% \\ [1ex] 
\midrule
3&Null pollutant MEB $<0$ & 61.3\% & 68.2\% & 61.1\% \\ [1ex] 
\midrule
$ \ \ 4^\dag$ &Nonnull pollutant MEB $<0$ & 17.8\% & 20.9\% & 17.9\% \\ [1ex] 
\midrule
5&$\big|\mathrm{OVB}\big|>\big|\mathrm{MEB}\big|$ & 79.9\% & 91.5\% & 95.1\%  \\ [1ex] 
 \bottomrule
\end{tabular}
\smallskip

\dag We remark that this row does not precisely correspond to Phenomenon 4 that the coefficient estimates for error-prone, nonnull pollutants are subject to a combination of attenuation bias and negative additive bias. Instead this row corresponds to an aspect of Phenomenon 4, and gives the proportion of simulations where the attenuation bias is smaller in magnitude than the negative additive bias. 
\end{table}

Figure \ref{fig:BiasSignSims} plots the percentage of times that four phenomena of interest occur as a function of the average pairwise correlations between the pollutants. The first 3 panels demonstrate that phenomena 1--3 occurred more frequently when the average pairwise correlations between the pollutants were larger, and they even tended not to occur when the average pairwise correlations between the pollutants were negative. The fourth panel, shows that while the MEB tended to be nonnegative for nonnull error-prone pollutants, it was more likely to be negative when the average pairwise correlations between the pollutants were high. This result is consistent with phenomenon 4 occurring often, especially for large average pairwise pollutant correlations, and suggests that the attenuation bias is typically larger than the negative additive bias, leading to a net positive bias.

\begin{figure}[!t]
    \centering
    \includegraphics[width=0.95 \hsize]{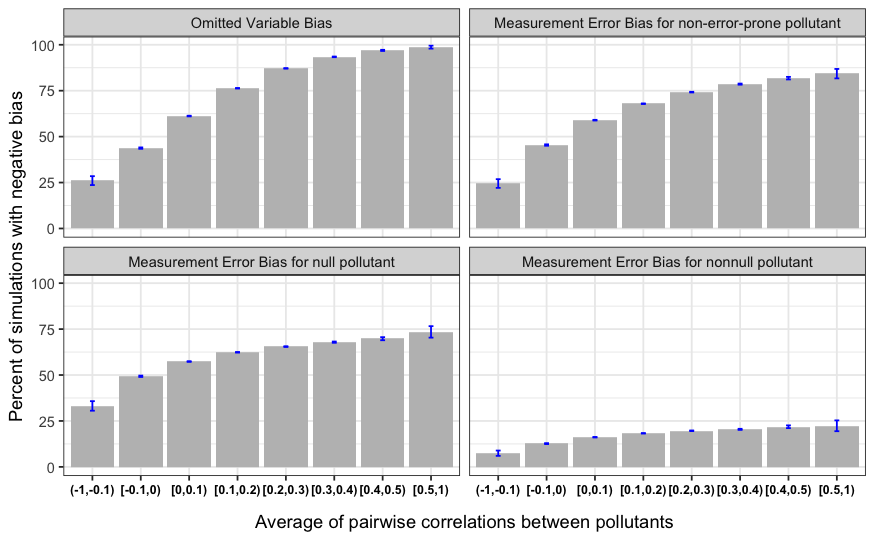}
    
    \caption{Proportion of Monte Carlo simulations with negative bias as a function of average pairwise correlation among the pollutants. We binned the $10^6$ Monte Carlo simulations into 8 categories, based on the average pairwise correlation between pollutants $\bar{\rho}$. The first bin is for simulations where $\bar{\rho} < -0.1$, the subsequent six bins are for simulations where $\bar{\rho}$ lies in a particular interval of length $0.1$ indicated by the x-axis labels, and the final bin is for simulations where $\bar{\rho}>0.5$. For each bin, the percentages of times $[\bbzOm]_5<0$ (top-left), $[\bbME]_5<0$ (top-right),  $[\bbME]_{10}<0$ (bottom-left) and $[\bbME]_{6}<0$ (bottom-right) are depicted. The error bars depict $\pm 1.96$ standard errors for the estimates of the proportion of times the bias is negative in each bin, where the uncertainty is due to the finite number of Monte Carlo trials run. The bins where $\bar{\rho} < -0.1$ and $\bar{\rho}>0.5$ had larger standard errors, because these events only occurred $1{,}263$ and $769$ times, respectively, out of the $10^6$ Monte Carlo simulations. }

    \label{fig:BiasSignSims}
\end{figure}

These results corroborate our claims that phenomena 1--4 tend to occur under the No Benefit and Pairwise $\PC$ Assumptions. The exact percentages presented in Table \ref{table:BiasSignSims} and Figure \ref{fig:BiasSignSims}  depend on the way we generated random covariance matrices of $(Z,X,W)$ and the relevant entries in $\beta$ in each simulation. The distribution of covariance matrices considered here, while having a broad support, will not be the same as the distribution of covariance matrices that investigators will encounter in practice. In the next section we study the OVB and MEB using a dataset and regression models that investigators would use in practice.

\section{Investigation of bias direction on a pollutant and crop yield dataset}\label{sec:ValidationOnDataset}

In this section we introduce a pollutant and crop yield dataset from the Midwestern United States and demonstrate that negative OVB and MEB tend to occur in our motivating application of assessing the association between air pollutants and crop yields using linear regression. In particular, we develop a validation scheme to demonstrate phenomena 1 and 2 that the perfectly measured pollutants tend to have negative OVB and MEB. We do not validate phenomena 3 and 4 about the direction of the MEB for null versus nonnull error-prone pollutants because it is not known which pollutants have no effects on crop yields.

\subsection{Dataset description}\label{sec:PanelDataset}

We assembled a county-level panel dataset in the Midwestern United States that spanned the years 2002--2019 that was similar to the one used in \cite{LobellBurneyERL2021} to study the impacts of air pollutants on corn and soybean yields. Our dataset comes from the same 9 states considered in \cite{LobellBurneyERL2021} and comprises a majority of the United States Corn Belt, a region where nearly 30\% of the global corn and soybean crop production occurs \citep{USDAWASDE}. Our study timespan was 2002--2019 because the error-prone predictions of air pollutants that we used were only available for those years.

Our outcome variables were annual county-level corn yields and soybean yields downloaded from the \cite{NASSQuickStats} Quick Stats webpage. The crop yield data is based on phone surveys of farmers in each county, and is only released in counties and years in which sufficiently many farmers responded to the survey. The yield data from 2010 is depicted in Figure \ref{fig:PollutantsAndCropYields} (top).

Daily air pollution data of $\CO$, $\NitrogenDi$, $\Ozone$, $\PMSmall$, $\PMBig$ and $\SulfurDi$ concentrations were downloaded from the Air Quality System (AQS) website of the \cite{EPA_AQS_Data}. The EPA's AQS is a database that contains air pollution data from a network of monitors. Figure \ref{fig:monitor_colocation} depicts the location of the monitors for our pollutants of interest in our study region that were active in 2010. We converted the daily resolution pollution data to yearly resolution data by taking the average of the daily mean pollutant concentrations across the 92 day June--August period. June--August corresponds to the peak of the growing season of corn and soybean crops in this region and is the same time period across which \cite{LobellBurneyERL2021} aggregated daily pollutant concentration data. 
For $\Ozone$ only, instead of considering the average concentration during the June--August period, we computed the average of the daily maximum of average concentrations across each 8-hour window, as this is a metric with which $\Ozone$ is commonly regulated (e.g. \cite{WHO_guidelines2021}), measured, or estimated (e.g. \cite{GRC_INLAPaper}). Figure \ref{fig:PairwiseCorr} depicts the sample pairwise correlations between the aggregated June--August pollution measurements for different pollutants, with each sample consisting of a unique monitoring site and year pair. Finally, we aggregated the monitoring pollutant data to the county level by taking averages across all monitoring sites within a county, rendering the pollution data at the same spatial and temporal resolution as the outcome variables.

Because many counties in our study region do not have AQS data for some or all of our pollutants of interest, we also used surface predictions of average pollutant concentrations from version 5.32 of the EPA's Community Multiscale Air Quality (CMAQ) model which were downloaded from the EPA's website \citep{EPA_CMAQ_Data}. CMAQ is a 3 dimensional chemical transport model that uses pollution emissions inventories and meteorological data as inputs to predict surface level air pollution concentrations \citep{CMAQ_ReviewPaper}. CMAQ has been used to study the impact of air pollution on crop yields in regions were pollution monitoring data is sparse \citep{Yi2016_OzoneChinaWheatObservational}. The prediction errors in CMAQ and their discrepancies with AQS monitor data are well documented (e.g. \cite{AQS_CMAQDataFusion}), and therefore, we treat the CMAQ data as an error-prone proxy for pollutant concentrations. The CMAQ-based pollutant concentration estimates that we downloaded were at 12 km $\times$ 12 km spatial resolution and a daily temporal resolution. In order to match the spatial and temporal resolution of our outcome variable we therefore averaged the pollutant concentrations across each county and across each June--August period. Figure \ref{fig:PollutantsAndCropYields} (bottom) shows the county-level averages of $\PMSmall$ during June--August of 2010 from the AQS monitoring data and from the CMAQ-based proxy data. Figure \ref{fig:AQSversusCMAQ} shows scatter plots of the AQS monitoring data versus the CMAQ-based proxies for all 6 pollutants with each point corresponding to a county-year pair in our dataset. 

\begin{figure}[!t]
    \centering
    \includegraphics[width=0.95 \hsize]{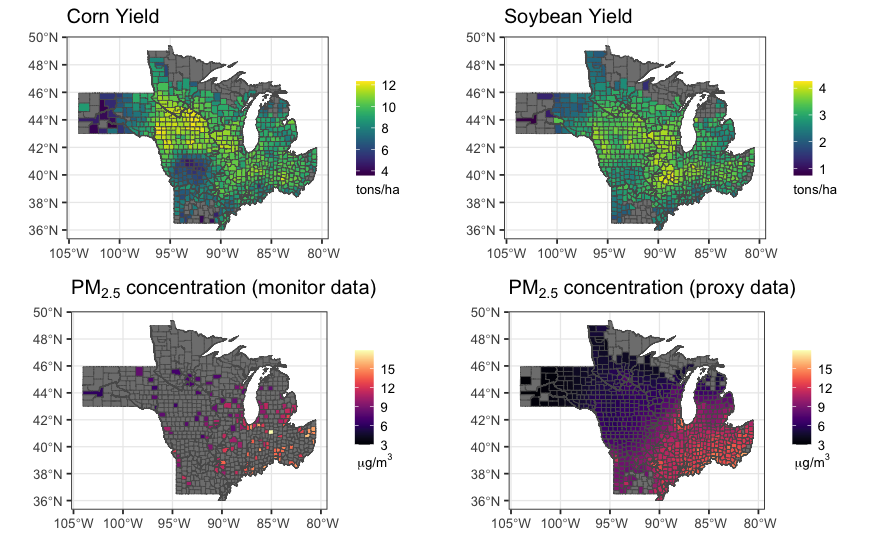}
    \caption{Visualization of panel dataset in 2010. This figure depicts the county-level averages of crop yields in 2010 for corn (top left) and soybean (top right). It also depicts the average June--August concentration in 2010 of $\PMSmall$ from AQS monitoring data (bottom left) and from the CMAQ-based proxy (bottom right). Counties colored in dark grey indicate missing data for either crop yields or monitoring data. Observe that many counties are missing monitoring data and some are missing crop yield data. The proxy data is available in every county but only counties with crop yield data are depicted. While not shown in this figure, our dataset also includes sparsely measured monitoring data and widely available proxy data for 5 other pollutants, 4 weather covariates, and data for each year between 2002 and 2019, inclusive.}
    \label{fig:PollutantsAndCropYields}
\end{figure}

\begin{figure}[!t]
    \centering
    \includegraphics[width=0.95 \hsize]{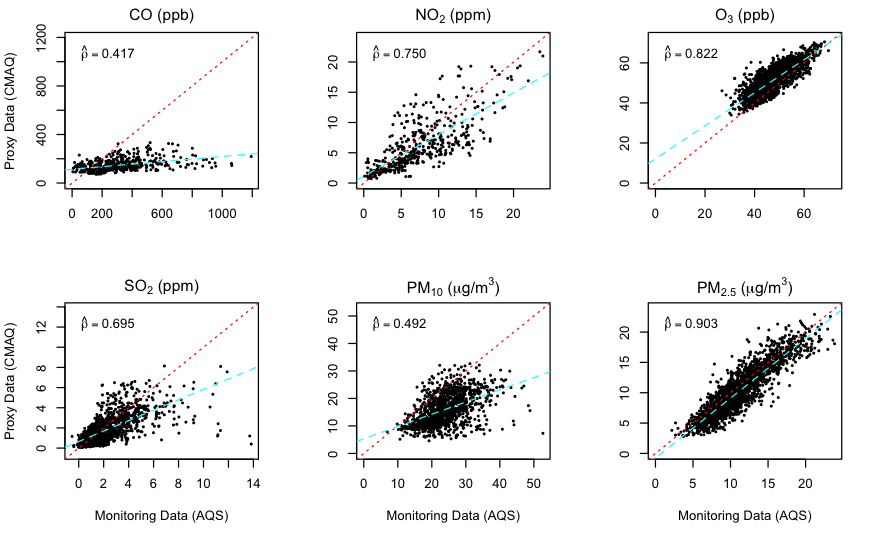}
    \caption{ Scatter plots of monitoring data versus proxy data for each pollutant. Each point in the scatter plot corresponds to a unique county-year pair in our dataset and depicts the average June--August concentration in that county and year as measured by the AQS pollution monitors and as estimated by the CMAQ-based proxy data. The scatter plots show a line through the origin with slope 1 (red, dotted), the best fit line (blue, dashed), and the sample Pearson correlations (top left text).}
    \label{fig:AQSversusCMAQ}
\end{figure}

We used the same four weather covariates as seen in \cite{LobellBurneyERL2021}: total April--August precipitation, total April--August precipitation squared, growing degree days, and extreme degree days. Growing degree days is a measure of the accumulated temperature exceeding 8 °C but below 30 °C  across the growing season (April--September) while extreme degree days is a measure of the accumulated temperature above 30 °C. A more explicit formula for these metrics and a description of how they are calculated using daily minimum and maximum temperatures can be found in Appendix A of \cite{RotationPaper}. Growing degree days and extreme degree days with a 30 °C threshold are commonly used covariates for modelling corn and soybean yields because there is evidence that temperature increases above 30 °C damage corn and soybean yields whereas temperature increases below 30 °C are slightly beneficial to corn and soybean yields \citep{SchlenkerAndRoberts}. We calculated annual values for these 4 weather covariates using monthly precipitation data and daily temperature extrema data from PRISM \citep{PRISM_original} which were accessed via Google Earth Engine \citep{PRISM_DailyData_VIA_GEE,PRISM_MonthlyData_VIA_GEE}, and computed county-level averages of these weather covariates by averaging across a random sample of 30 points in each county.

Let $i$ denote the county, $t$ denote the year, $Z_{it}$ denote the vector of 4 weather covariates, $X_{it}$ denote the vector of the 6 pollutant concentrations measured from monitors, and $Y_{it}$ denote the crop yields from the datasets described above. We suppose the investigator is interested in fitting the following regression with county level fixed effects $c_i$
\begin{equation}\label{eq:GeneralPanelRegressionModel}
    \log(Y_{it}) = \beta^\tran X_{it} + \gamma^\tran Z_{it} +\lambda t +c_i +\varepsilon_{it}.
\end{equation} To motivate this linear model, note that county level (or state level) fixed effects are common in the pollutant and crop yield literature \citep{BurneyIndia2014OzoneBlackCarbon,ColumbiaOzone2019,LobellBurneyERL2021,Da2022} and so is taking the log of the yield \citep{BurneyIndia2014OzoneBlackCarbon,ColumbiaOzone2019,HeEtAlOzoneMLpredsVsCropYieldsChina}. We include a linear time trend to avoid a spurious finding because in the Midwestern US, crop yields have been increasing over time while pollutant concentrations have been decreasing over time. We suppose that the investigator is most interested in estimating $\beta$ and consider the direction of the bias when they either omit components of the vector $X_{it}$ that are sparsely measured or replace a subset of the components of $X_{it}$ with their CMAQ-based proxies.

\subsection{Validation of negative biases}\label{sec:Validation Scheme}

Ideally, we would fit the linear regression specified by Equation \eqref{eq:GeneralPanelRegressionModel} and check the bias direction when omitting some of the 6 pollutants or when replacing them with their CMAQ-based proxies. However, this is not feasible because we do not know the true regression coefficients beforehand and the dataset cannot give us precise estimates of them, since only 193 county-year pairs have data for all 6 pollutants. Therefore, we use a validation scheme where we estimate the population regression coefficients in regressions with only 2 pollutants out of the 6. This way, we can use larger samples to get more precise estimates of the coefficients for validation. In each regression with 2 pollutants, we empirically estimate the OVB due to dropping one of the two pollutants or the MEB due to replacing one of the two pollutants by its CMAQ-based proxy. 

In particular for each $j,j' \in \{ \CO,\NitrogenDi,\Ozone,\PMBig,\PMSmall,\SulfurDi \}$ such that $j \neq j'$ and for each $k \in \{ \text{corn}, \text{soybean} \}$, we let $j$ denote the main pollutant, $j'$ the control pollutant, and estimate the OVB and MEB for pollutant $j$ as follows. We let $X_{it}^{(j)}$ and $X_{it}^{(j')}$ denote the AQS-based measurements of pollutant concentrations of $j$ and $j'$, respectively, $W_{it}^{(j')}$ denote the CMAQ-based estimate of the concentration of pollutant $j'$, and $Y_{it}^{(k)}$ denote the average yield for crop $k$ based on USDA survey data in county $i$ and year $t$. We then consider only the susbset of $(i,t)$ pairs for which $X_{it}^{(j)}$, $X_{it}^{(j')}$, and $Y_{it}^{(k)}$ measurements are available and run the following three regressions on this subset of the data. 
\begin{enumerate}
    \item \textit{"Ground truth" regression}:   $\log(Y_{it}^{(k)}) = \beta_{\text{GT}}  X_{it}^{(j)} +\alpha_1 X_{it}^{(j')}  + \gamma_1^\tran Z_{it} +\lambda_1 t + c_i +\varepsilon_{it}$,
    \item \textit{Omitted pollutant regression}: $\log(Y_{it}^{(k)}) = \beta_{\text{OM}}  X_{it}^{(j)}  + \gamma_2^\tran Z_{it} +\lambda_2 t + c_i +\varepsilon_{it}$, and
    \item \textit{Error-prone proxy regression}: $\log(Y_{it}^{(k)}) = \beta_{\text{ME}}  X_{it}^{(j)} +\alpha_3 W_{it}^{(j')}  + \gamma_3^\tran Z_{it} +\lambda_3 t + c_i +\varepsilon_{it}$.
\end{enumerate} 
Letting $\hat{\beta}_{\text{GT}}$, $\hat{\beta}_{\text{OM}}$, and $\hat{\beta}_{\text{ME}}$ be the point estimates for the regression coefficient of pollutant $j$ from the above 3 regressions, we define the point estimate of the OVB to be $\hat{\beta}_{\text{OM}}-\hat{\beta}_{\text{GT}}$ and the point estimate of the MEB to be $\hat{\beta}_{\text{ME}}-\hat{\beta}_{\text{GT}}$. We also record the t-statistics corresponding to $\hat{\beta}_{\text{GT}}$, $\hat{\beta}_{\text{OM}}$, and $\hat{\beta}_{\text{ME}}$.

Our theoretical results predict that the point estimates for the OVB and MEB should be negative; however, in each of the $60=6 \times 5 \times 2$ combinations of main pollutant, control pollutant, and outcome crop, the $95 \%$ confidence intervals of $\hat{\beta}_{\text{GT}}$ always overlapped with those of $\hat{\beta}_{\text{ME}}$ and $\hat{\beta}_{\text{OM}}$. Therefore, in order to demonstrate that there was a tendency towards negative OVB and MEB in our dataset, we aggregated the OVB and MEB estimates across the $60$ combinations considered using 3 different summary statistics. First, we counted the number of combinations out of 60 that had negative OVB and MEB point estimates. Second, we calculated the average difference in t-statistics across the 60 combinations (to study the OVB direction we considered the difference in t-statistics corresponding to $\hat{\beta}_{\text{OM}}$ minus that corresponding to $\hat{\beta}_{\text{GT}}$ whereas to study the MEB direction we considered the difference in t-statistics corresponding to $\hat{\beta}_{\text{ME}}$ minus that corresponding to $\hat{\beta}_{\text{GT}}$). We chose this approach because \cite{ackermann2009general} found that averaging $t$-statistics was a  very powerful way to aggregate over genes in gene sets. Third, we calculated the average OVB and average MEB point estimates across the 60 combinations after dividing the concentration for each pollutant by the WHO 2021 daily target (see Table 3.26 of \cite{WHO_guidelines2021}) for that pollutant, so that each pollutant would be on a comparable scale and be in units of WHO 2021 targets (we note that the outcome variable of log yield is unitless and we therefore do not need to rescale the yields). See Appendix \ref{sec:WHO_rescaling} for details about how we rescaled the pollutant concentrations. These 3 summary statistics for both the OVB and MEB are depicted as purple dots in the first and third rows of of Figure \ref{fig:BootstrapSummariesFigure}, respectively. For both bias types there were more negative biases than positive, a negative mean $t$-statistic difference and a negative mean coefficient difference over the 60 combinations considered.

\begin{figure}[!t]
    \centering
    \includegraphics[width=0.95 \hsize]{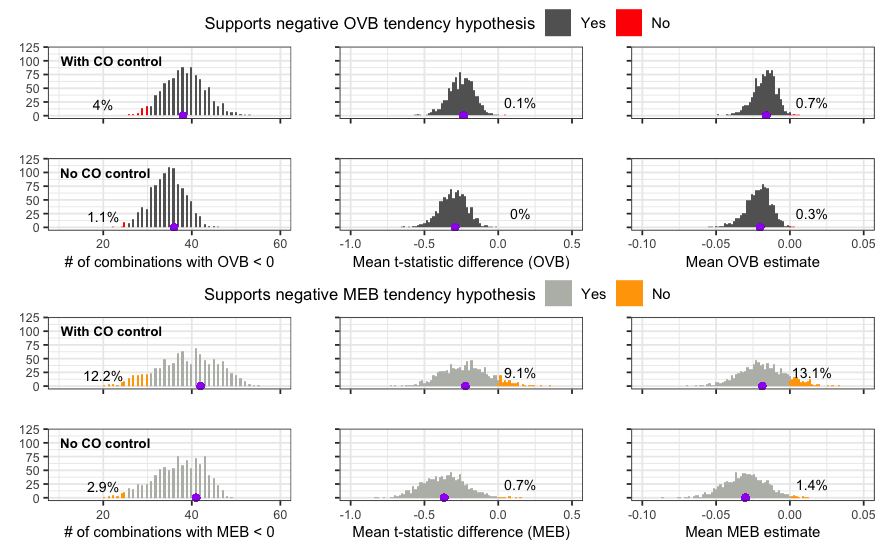}
    \caption{Histogram of summary statistics for the $1{,}000$ bootstrap trials. The 1st and 3rd row of histograms are summary statistics computed accross all 60 primary pollutant, control pollutant and outcome crop combinations whereas the 2nd and 4th row of histograms are summary statistics computed across the 50 combinations in which $\CO$ was not a control pollutant. In each histogram the purple dot denotes the value of the summary statistic computed on the actual data sample. Histogram bars corresponding to summary statistic values that support (do not support) our hypotheses of negative OVB and MEB tendencies are colored in dark grey (red) and light grey (orange), respectively. The percent of bootstrap simulations where the summary statistic value does not support the negative OVB or MEB tendency hypothesis is printed on each panel. In particular for the right two columns this percentage is the percentage of bootstrap simulations with a nonnegative summary statistic and in the first column this is the percentage of bootstrap simulations in which at most 30 out of 60 (in rows 1 and 3) or at most 25 out of 50 (in rows 2 and 4) combinations had a negative OVB or MEB. }
    \label{fig:BootstrapSummariesFigure}
\end{figure}

In order to demonstrate that these tendencies are statistically significant, we run $1{,}000$ bootstrap trials. In each bootstrap trial, we first resample with replacement the $767$ counties that have at least one year of corn or soybean data, and then on the bootstrapped dataset we conduct the same analysis described above of estimating the OVB and MEB for the 60 main pollutant, control pollutant and outcome crop combinations. We compute the same three summary statistics for both the OVB and MEB across the 60 combinations, and the distribution of these summary statistics across the $1{,}000$ bootstrap trials can be seen in the first and third rows of Figure \ref{fig:BootstrapSummariesFigure}. For each bootstrap trial, we also compute the three summary statistics for both the OVB and MEB, across the 50 main pollutant, control pollutant and outcome crop combinations in which $\CO$ is not the control pollutant and display the results in the second and fourth rows of Figure \ref{fig:BootstrapSummariesFigure}.

Our results show a consistent, statistically significant tendency towards negative OVB. They also suggest a tendency towards negative MEB, with statistical significance in the cases where only the 50 combinations without $\CO$ as a control pollutant are considered. Because the outcome variable is converted to the log scale and the pollutants are rescaled to be in terms of WHO targets, the x-axis values in the third column of Figure \ref{fig:BootstrapSummariesFigure} can roughly be interpreted as the average OVB or MEB in estimates of the effect of increasing average growing season concentrations of a pollutant by one daily WHO target on the $\%$ change in crop yield. The results therefore suggest that on average studies can overestimate the damage of increasing air pollution by 1 WHO target on crop yields by roughly 1--5\%.

Our empirical results are even more consistent with our theoretical results when we ignore combinations in which $\CO$ is a control pollutant, and there are a couple reasons why cases that involve $\CO$ as a control pollutant do not support our theoretical findings. First, $\CO$ may be beneficial to crop yields at the concentrations seen in our dataset and therefore violate the No Benefit assumption that was critical to our theoretical results. \cite{PlantGrowthCOReviewPaper} reviews a literature of experiments which demonstrate that small amounts of exposure to either ambient or aqueous $\CO$ benefits various physiological processes in plants. While $\CO$ is harmful to both plants and animals at high concentrations, in our dataset the $\CO$ concentration is far below the WHO 2021 targets, and therefore could be beneficial at the concentrations seen in our dataset. For other pollutants in our dataset, the concentrations do not consistently fall far below the WHO targets (Figure \ref{fig:AQSvsWHOtargets}) and render it less plausible that the other pollutants violate the No Benefit assumption and help crop yields at the concentrations seen in the data. Second, as is seen in Figure \ref{fig:AQSversusCMAQ} as well as in \cite{AQS_CMAQDataFusion}, the CMAQ-based proxy for $\CO$ has particularly poor agreement with the monitoring data. Therefore, the measurement error for the $\CO$ proxy may be so irregular that our theoretical results suggesting a tendency of negative MEB for perfectly measured pollutants do not hold in the case where $\CO$ is the control pollutant.

\section{Discussion} \label{sec:Discussion}

We discuss several limitations of the assumptions used for obtaining theoretical results in our study. First, all formulas and non data-based analyses regarding MEB in this paper assume that the regression residuals $\varepsilon_i$ are uncorrelated with the measurement errors $W_i-X_i$ (see Condition (ii) in Section \ref{sec:Setup}). This assumption may not hold for remote sensing-based or machine learning-based predictions $W$ of air pollutant concentrations $X$. 
Second, our analyses regarding the direction of the OVB and the MEB rely heavily upon the assumption that the regression coefficients corresponding to the pollutants are all nonpositive. 
There are exceptions, where some pollutants can have a positive association with crop yields. For example, nitrate deposition associated with nitrate aerosols can have a direct positive effect on crop yields \citep{NitrogenDepositionChina}, and $\CO$ can benefit plant growth at low concentrations \citep{PlantGrowthCOReviewPaper}. Third, our analyses rely upon the assumption that all pollutants have positive pairwise correlations. While we expect this assumption to be easily testable and true in most motivating settings, analyses using panel datasets often leverage two-way fixed effects models to control for space and time. Regressing out two-way fixed effects can weaken or even remove positive correlations between pollutants, as is seen in Table S1 of \cite{LobellBurneyERL2021}.

Given these limitations, we cannot be sure that any given  study in the literature had negative OVB.  Perhaps some omitted pollutants had a positive coefficient for the outcome or had a negative partial correlation with a non-omitted pollutant conditional on the model's space, time, and weather controls. Nonetheless, we expect negative OVB to be the norm even in the presence of some weak negative dependencies between pollutants or some weakly positive regression coefficients, and indeed we found clear evidence of negative OVB in our motivating application on a real dataset in Section \ref{sec:ValidationOnDataset}.

Our findings are less conclusive regarding the direction of the MEB. Both our theoretical and simulation-based results rely on a correlational version of the nondifferential measurement error assumption and our theoretical results rely upon an even stronger assumption of classical measurement error that is uncorrelated across pollutants. Under these assumptions, we found theoretical guarantees that the MEB is negative for pollutants that are perfectly measured as well as for null pollutants and that these biases tended to be negative in the more general simulation settings. The MEB of nonnull pollutants tended to be positive in our simulations. Because an investigator is unlikely to know a priori whether a pollutant is null or nonnull and the extent to which the nondifferential measurement error assumption is violated, we cannot be certain of the direction of MEB in specific other papers. For null pollutants and perfectly measured pollutants, we do, however, expect negative MEBs to be more common than positive ones. Indeed, we found evidence of negative MEB for perfectly measured pollutants in our motivating application on a real dataset in Section \ref{sec:ValidationOnDataset} without making any assumptions about the structure of the measurement error.

Our results on the direction of the OVB and MEB have the following two consequences.
First, for perfectly measured pollutants, Proposition \ref{prop:OmBiasNegConds}, Corollary \ref{cor:NegBiasNoME}, our simulations, and our data-based validation all suggest that the OVB and MEB will tend to be negative. Hence if there is one perfectly measured pollutant such as $\Ozone$ and the remaining relevant pollutant concentrations only have error-prone estimates $W$, an investigator cannot eliminate bias induced by omitted variables by simply adding $W$ as a covariate in their regression model. Second, there is a common myth that measurement error simply biases regression coefficients of error-prone covariates towards zero. We show, however, that under certain assumptions, some of which are common in the measurement error literature and some of which are realistic to our motivating application, the coefficient estimates corresponding to null pollutants are guaranteed to be negatively biased away from zero. 

We briefly discuss frequentist and Bayesian approaches to correct for MEB as well as datasets that could be useful to these ends. In settings where all variables of interest $(Y,Z,W,X)$ are available on a small sample while a much larger sample only has $(Y,Z,W)$ measurements, numerous methods have been proposed to correct for MEB \citep{CarrolStefanski06,FongAndTyler2021,Proctor2023RS_MI,Stephen2023PredictionPoweredInf}; however, these methods still have challenges in their application to our motivating setting. First, only the methods in \cite{CarrolStefanski06}, such as regression calibration, can be readily applied to settings where different components of $X$ are missing in different subsets of the small sample (as Figure \ref{fig:monitor_colocation} suggests would be the case in air pollutant studies). Second, with the exception of the method in \cite{Stephen2023PredictionPoweredInf}, these methods fail to remove the bias asymptotically in settings where Condition (ii) or the nondifferential measurement error assumption fail to hold. Third, in our motivating setting, because air pollutant monitors tend to be placed near cities rather than croplands (Figure \ref{fig:monitor_colocation}), and because pollution concentrations tend to be higher near cities than they are near farmlands, there is a shift in the distribution of $X$ between the locations where the measurements of $X$ are available and the target sample of interest. While \cite{Stephen2023PredictionPoweredInf} presents an approach for extending their method to correct for distribution shift in a similar setting, that approach would only apply to our setting if entries of $X$ were discrete variables. 

The aforementioned approaches are not in a formal causal inference framework. \cite{XiaoWuFrancesaDominiciMECorrection} develops a regression calibration-based approach that corrects for MEB in a formal causal inference framework to study the impact of air pollution on mortality; however, their method also relies upon the nondifferential measurement error assumption and focuses on a setting where there are $d=1$ error-prone pollutants.

Another appealing class of methods to correct for MEB are 2-stage Bayesian methods, in which one research group fits a Bayesian model to estimate the error-prone covariates (such as the Bayesian model for pollutant concentrations in the United States in \cite{GRC_INLAPaper}) and produces posterior samples for the error-prone covariates. A second research group then uses posterior samples from the first group's model to form a prior for a downstream Bayesian inference task, such as regression. \cite{Bayesian2StageMethodology} develop a 2-stage Bayesian method and also review previously existing 2-stage Bayesian methods, many of which can be naturally extended to settings where there are multiple error-prone pollutants. They compare the methods both in simulations and in an example using $\PMSmall$ concentration data, proxy estimates for $\PMSmall$ data, and a health outcome variable. 2-stage Bayesian methods are appealing because they provide a division of labor between groups that have expertise in modelling pollutant concentrations and groups that have expertise in modelling outcome variables.  However, a challenge with the 2-stage Bayesian methods is that the Bayesian estimates produced by the first research group may not be entirely geared towards the downstream task of regressing an outcome variable on pollutant concentrations, even if the predictions and uncertainty quantification pass the first group's validation tests in terms of $R^2$ and coverage. For example, the errors in the pollutant concentration estimates may be correlated with the outcome variable or other covariates that end up being used in the downstream task, leading to bias in the downstream regression. Another subtle issue is that the predicted pollutant concentrations may be much more correlated with each other than are the actual pollutant concentrations, which could lead to inaccurate inferences in the downstream regression task. 

In light of these challenges, air pollutant monitoring sites that accurately measure a full suite of pollutants and that are placed near farmlands would greatly benefit research into the association between air pollutants and crop yields. 
However, it would take a number of years for these benefits to be conferred to research efforts. In the shorter term, methods are needed that can correct for MEB in settings where there is differential measurement error and distributional shift between monitoring sites (which tend to be near cities) and the target sample (e.g., the sample of farmland), and where different monitoring sites are missing different subsets of the pollutants of interest.

\section*{Acknowledgments}
This work was supported by a Stanford Interdisciplinary Graduate Fellowship and the National Science Foundation under grant DMS-2152780. The authors wish to thank Wenlong Gong and Brian Reich for helpful discussions.

\bibliographystyle{apalike}
\bibliography{PollutantsAndCropYields}

\appendix
\setcounter{table}{0}
\renewcommand{\thetable}{A\arabic{table}}
\setcounter{figure}{0}
\renewcommand{\thefigure}{A\arabic{figure}}
\section{Notation for Appendix}\label{sec:notationAppendix}

These appendices have our derivations for omitted variable bias (OVB) and measurement error bias (MEB) formulas presented in the main text, comparisons between the OVB and MEB, proofs of Proposition \ref{prop:OmBiasNegConds} and Theorem \ref{thoerem:PollutantBiasOmegaSigns}, and further details on our simulations.
In this section, we present some notation used throughout the appendices. Throughout the appendices, we use $\bm{1}_n$ to denote a vector of $n$ ones, $\bm{1}$ to denote a vector of ones when the dimension is understood from context, and $I_n$ to denote an $n \times n$ identity matrix.

We have observations indexed by $i=1,\dots,N$. Recall that we use random variables $Z_i\in\real^p$, $X_i\in\real^d$, $W_i\in\real^d$ and $Y_i\in\real$ representing covariates measured without error, the true value of pollutants, the value of proxy measures for those pollutants, and the response, respectively. Sample averages of these quantities are denoted by $\bar Z$, $\bar X$, $\bar Y$ and $\bar W$. Components are denoted by superscripts, as in $Z^{(j)}$ for the $j$'th covariate. We use $o_p(1)$ to denote any term that converges to $0$ in probability as the number of samples $N \to \infty$.

When deriving formulas for the OVB and MEB in Appendix \ref{sec:formula_deriv}, it is convenient to use centered data matrices of these variables. We define $\cz\in\real^{N\times p}$, $\cx\in\real^{N\times d}$ and $\cw\in\real^{N\times d}$ to be matrices with $i$'th rows $Z_i-\bar Z$, $X_i-\bar X$, and $W_i -\bar W$ respectively. Observe that since $\cz$, $\cx$, and $\cw$, are centered, $\cz^\tran \bm{1}=0$, $\cx^\tran \bm{1}=0$, and $\cw^\tran \bm{1}=0$. We also let $\cy\in\real^{N} = (Y_1,\ldots,Y_N)$ denote the vector of responses 
and let $\varepsilon = (\varepsilon_i)_{i=1}^N$ be the vector of the error terms in the regression model. Recall that for $i=1,\ldots,N$, $\varepsilon_i \equiv Y_i -\alpha -Z_i^\tran \beta_Z -X_i^\tran \beta_X$, and note that by definition of $\varepsilon_i$ and by definition of $(\alpha,\beta_Z,\beta_X)$ in~\eqref{eq:def_reg_coef}, $\varepsilon_i$ is uncorrelated with $Z_i$ and $X_i$. Recalling that under Condition (ii), $\varepsilon_i$ is uncorrelated with $W_i-X_i$, it follows that under Condition (ii),  $\varepsilon_i$ must be uncorrelated with $W_i$ as well.


In Appendix \ref{sec:OmBiasVsMEbias}, we investigate the relationship between OVB and MEB under various assumptions on $p$, $d$ and the measurement error structure. For any random variable $U$ and $V$ in that Appendix, it is convenient to let $\rho_{UV}$ denote the correlation between $U$ and $V$, and let $\sigma_U$ and $\sigma_V$ denote the standard deviations of $U$ and $V$ respectively. So for example in the case where $d=1$, $X$ is a random variable and $\rho_{X Z^{(k)}} =\corr(X,Z^{(k)})$ and $\sigma_X^2=\var(X)$. 

\section{Formulas for OVB and MEB}\label{sec:formula_deriv}

\subsection{OVB formula derivation}\label{sec:ovbderivation}

Here we suppose the investigator estimates $\beta_Z$ by regressing $Y$ on $Z$ with OLS; that is they estimate $\beta_Z$ with $\hat{\beta}_Z = (\cz ^\tran \cz)^{-1} \cz^ \tran \cy$. Thus under Conditions (i), 
$$\begin{aligned}
\hat{\beta}_Z & = (\cz ^\tran \cz)^{-1} \cz^ \tran \cy
\\ & = (\cz ^\tran \cz)^{-1} \cz^ \tran \big( \alpha \bm{1} + \cz \beta_Z + \cx \beta_X +\varepsilon \big) 
\\ & = 0+ \beta_Z + \big( \frac{1}{N} \cz ^\tran \cz \big)^{-1} \Big( \big(\frac{1}{N} \cz^ \tran \cx \big) \beta_X + \big( \frac{1}{N} \cz^ \tran \varepsilon \big) \Big)
\\ & \stackrel{\text{(i)}}{=} \beta_Z + \big( \cov(Z,Z) +o_p(1) \big)^{-1} \Big( \big( \cov(Z,X) +o_p(1) \big) \beta_X + \cov(Z_i,\varepsilon_i) +o_p(1)  \Big)
\\ & = \beta_Z + \big( \cov(Z,Z) +o_p(1) \big)^{-1} \Big( \big( \cov(Z,X) +o_p(1) \big) \beta_X + 0 +o_p(1)  \Big)
\\ & = \beta_Z + \big[\cov(Z,Z) \big]^{-1} \big[ \cov(Z,X)\big] \beta_X +o_p(1)
\end{aligned}$$

Above, in the fourth step, we used Condition (i) and the fact that $\cx$ and $\cz$ are the data matrices of centered versions of $X$ and $Z$, and in the fifth step we use the fact that $\corr(Z_i,\varepsilon_i)=0$. Since the term $o_p(1)$ converges to zero in probability as $N \to \infty$, $$\text{plim}_{N \to \infty} ( \hat{\beta}_Z - \beta_Z )=\big[\cov(Z,Z) \big]^{-1} \big[ \cov(Z,X)\big] \beta_X = A^{-1} B \beta_X.$$

\subsection{MEB formula derivation}\label{sec:mebderivation}

Here we suppose the investigator estimates $\beta=(\beta_Z,\beta_X)$ by regressing $Y$ on $(Z,W)$ with OLS; that is letting $\cmw = \begin{bmatrix} \cz & \cw \end{bmatrix}$ be the centered data matrix of $(Z,W)$ they estimate $\beta$ with $\hat{\beta} = (\cmw ^\tran \cmw)^{-1} \cmw^ \tran \cy$.  
It is convenient to let $\cmx = \begin{bmatrix} \cz & \cx \end{bmatrix}$ be the centered data matrix of $(Z,X)$ and $\ce= \cmx-\cmw$ be the centered covariate error matrix. 
By Condition (i) and the fact that $\cz$, $\cx$, and $\cw$ are centered matrices, 
$\frac{1}{N} \cmw^ \tran \cmw \xrightarrow{p} \Sigma_M$ and $\frac{1}{N} \cmw^ \tran \ce \xrightarrow{p} \Sigma_{M,E}$, where $\Sigma_M$ and $\Sigma_{M,E}$ are defined at~\eqref{eq:thosecovs}. Thus under Conditions (i) and (ii), 

$$\begin{aligned}
\hat{\beta} & = (\cmw^ \tran \cmw)^{-1} \cmw^\tran \cy
\\ & = (\cmw^ \tran \cmw)^{-1} \cmw^\tran (\alpha \bm{1} + \cz \beta_Z +\cx \beta_X+\varepsilon)
\\ & = 0+(\cmw^ \tran \cmw)^{-1} \cmw^\tran ( \cmx \beta +\varepsilon)
\\ & = (\cmw^ \tran \cmw)^{-1} \cmw^\tran ( \cmw \beta + \ce \beta +\varepsilon)
\\ & = \beta + \Big( \frac{1}{N} \cmw^ \tran \cmw \Big)^{-1}  \Big( \frac{1}{N} \cmw^\tran \ce \beta + \frac{1}{N} \cmw^\tran \varepsilon \Big)
\\ & \stackrel{\text{(i)}}{=} \beta + \big( \Sigma_M +o_p(1) \big)^{-1}  \bigg( \big(\Sigma_{M,E} +o_p(1) \big) \beta + \cov \Big( \begin{bmatrix} Z_i \\ W_i \end{bmatrix}, \varepsilon_i \Big) +o_p(1) \bigg)
\\ & \stackrel{\text{(ii)}}{=} \beta + \big( \Sigma_M +o_p(1) \big)^{-1}  \Big( \big(\Sigma_{M,E} +o_p(1) \big) \beta + 0+o_p(1) \Big)
\\ & = \beta +  \Sigma_M^{-1} \Sigma_{M,E} \beta +o_p(1).
\end{aligned}$$

Above, in the sixth step, we used Condition (i) and the fact that $\cmw$ and $\ce$ are matrices with centered columns, and in the seventh step we use Condition (ii) coupled with the fact that $\varepsilon_i$ is uncorrelated with $Z_i$ and $X_i$. Since the term $o_p(1)$ converges to zero in probability it follows that 
$$\bbME \equiv \text{plim}_{N \to \infty} ( \hat{\beta} - \beta )=\Sigma_M^{-1} \Sigma_{M,E} \beta,$$ confirming Equation \eqref{eq:MEbiasAll}.

We now use the above formula to derive the MEB vector $\bbzME$ for $\beta_Z$. Recalling that $$\Sigma_M  =\begin{bmatrix} A & C \\ C^\tran & G \end{bmatrix} \quad \text{and} \quad  \Sigma_{M,E} =\begin{bmatrix} 0 & B-C \\ 0 & F^ \tran -G \end{bmatrix},$$ 
$$\begin{aligned}
     \bbzME  & = \Big[ \Sigma_M^{-1} \Sigma_{M,E} \beta \Big]_{1:p}
     \\ & = \Bigg[ \begin{bmatrix} [\Sigma_M^{-1}]_{1:p,1:p} & [\Sigma_M^{-1}]_{1:p,(p+1):(p+d)} \\ [\Sigma_M^{-1}]_{(p+1):(p+d),1:p} & [\Sigma_M^{-1}]_{(p+1):(p+d),(p+1):(p+d)} \end{bmatrix} \begin{bmatrix} 0 & B-C \\ 0 & F^ \tran -G \end{bmatrix} \begin{bmatrix} \beta_Z \\ \beta_X \end{bmatrix} \Bigg]_{1:p}
     \\ & =  [\Sigma_M^{-1}]_{1:p,1:p} (B-C) \beta_X+ [\Sigma_M^{-1}]_{1:p,(p+1):(p+d)} (F^ \tran -G) \beta_X
    \\ & =  (A-C G^{-1} C^ \tran )^{-1} (B  -CG^{-1} F^ \tran )\beta_X.
\end{aligned}$$
In the last step above, we apply the blockwise matrix inversion formula and cancel terms.

\section{Relationships between MEB and OVB}\label{sec:OmBiasVsMEbias}

Here we investigate whether the OVB is necessarily worse than the MEB. In particular, we check whether the entries of $\bbzOm$ must be at least as far away from zero as the corresponding entries of $\bbzME$ are. We investigate this by considering six different cases, which are summarized in Table  \ref{table:OMbiasVsMEbias}. In Appendices \ref{sec:OmBiasVsMEbias}.1--\ref{sec:OmBiasVsMEbias}.6, we show the results summarized in Table \ref{table:OMbiasVsMEbias}, with one case covered in each subsection.

In Cases 1--4 the entries of $\bbzOm$ are guaranteed to be at least as far away from zero as the corresponding entries of $\bbzME$ are. For Case 1, we present the well known result that under a Berkson measurement error model, $\bbzME=0$. For Case 2 where $d=1$, we prove that OVB is guaranteed to be worse than the MEB if there is no partial correlation between $Z$ and the error-prone proxy $W$, conditional on the true value $X$ of the error-prone variable. This partial correlation assumption in Case 2 can be thought of as a generalization of the classical measurement error assumption, because it is guaranteed to hold in classical measurement error settings. For Case 3 where $p=d=1$, we show that under some plausible
assumptions on the relative sizes and signs of $\rho_{XW}$, $\rho_{ZX}$, and $\rho_{ZW}$, the OVB is guaranteed to be worse than the MEB. In Case 4, our results show that under a classical measurement error model, in the limit as the size of the measurement error goes to infinity, $\bbzME$ converges to $\bbzOm$.

\begin{table}[t]
\caption{Summary of cases considered to investigate whether OVB is always worse than the MEB. In all cases we assume Conditions (i)-(iii) so that formulas \eqref{eq:OmVarBias} and \eqref{eq:MEbiasZ} hold.  In Case 2, $\bm{\rho}_{WZ \mid X}$ denotes the vector of partial correlations between $W$ and $Z^{(j)}$ conditional on $X$, for $j=1,\dots,p$. In Case 3, SA stands for the ``sign assumption" that $\rho_{ZW}$ and $\rho_{ZX}$ have the same sign.}
\label{table:OMbiasVsMEbias}
\centering
\begin{tabular}{  c  l  l  l  } 
\toprule
 Case  & Dimensions & Measurement error assumptions & 
 $\big|\mathrm{OVB}\big|>\big|\mathrm{MEB}\big|$\\
\midrule
 1 & $p \geq 1$, $d \geq 1$ & Berkson-type error  & Always: $\bbzME=0$  \\  [1ex] 
 2 & $p\geq 1$, $d=1$  & $\bm{\rho}_{WZ | X} =0$ & Always   \\ [1ex] 
 3 & $p=d=1$  & $\rho_{XW} > \vert \rho_{ZW} \vert, \vert \rho_{ZX} \vert >\vert \rho_{ZW} \vert$, SA  & Always  \\ [1ex] 
 4 & $p \geq 1$, $d \geq 1$ & Classical measurement error $\to \infty$  & Same: $\bbzME\,\to\,\bbzOm$  \\ [1ex] 
 5 & $d > p$ & Classical measurement error & Sometimes   \\ [1ex] 
 6 & $p=d=1$ & None & Sometimes  \\ [1ex] 
 \bottomrule
\end{tabular}

\end{table}

While our findings in Case 1--4 might suggest on their own that the OVB is always worse than the MEB, in Cases 5 and 6 we find that there are examples where the MEB is strictly worse than the OVB. In Case 5 where $d>p$, even if we are willing to assume classical measurement error, we provide a counterexample showing that the MEB can be strictly worse than the OVB. We believe that under classical measurement error, the MEB is only worse than the OVB for particular directions of the unknown vector $\beta_X$ such that the OVB is near $0$. In Case 6 where $p=d=1$ and no measurement error assumptions are imposed, we present a rather artificial counterexample where the OVB$=0$ and the MEB can be arbitrarily large.

\subsection{No bias under Berkson-type error}

In the Berkson-type error model we suppose that $X=W+E$, where $E$ is a random vector of errors with mean $0$ that is independent of $W$, $Z$, and $\varepsilon$. In this case note that $B=\cov(Z,W+E)=C $ and $F=\cov(W+E,W)=G$. Plugging this into the formula for $\Sigma_{M,E}$ yields that $\Sigma_{M,E}=0$, which by \eqref{eq:MEbiasAll} further implies that under the Berkson-type measurement error model $\bbME =0 \Rightarrow \bbzME=0$. This confirms the well known result that if the covariates have Berkson-type measurement errors it does not induce bias in the regression coefficient estimates.

\subsection{Case where $d=1$ and partial correlation of $W$ and $Z$ given $X$ is zero} 

Suppose $d=1$ and further we assume that for each $k \in \{1,\dots,p \}$ the partial correlation of $W$ and $Z^{(k)}$ given $X$, denoted by $\rho_{W Z^{(k)} \mid X}$, is equal to zero. We will show that when $\rho_{WZ^{(k)} \mid X}=0$ for all $k \in \{1,\dots,p \}$, then OVB is strictly worse than MEB. 

Since $d=1$, $\beta_{X}$, $\rho_{XW}$, $\sigma_X$ and $\sigma_W$ are scalars. Note that $F= \sigma_X \sigma_W \rho_{XW}, \ G=\sigma_W^2,\text{ and } D=\sigma_X^2.$ Further, since $d=1$, $C$ and $B$ are just vectors in $\mathbb{R}^{p}$. Thus for any $k \in \{1,\dots,p \}$ our assumption that $\rho_{WZ^{(k)} \mid X}=0$ implies that
$$0=\rho_{WZ^{(k)}}-\rho_{XW} \rho_{XZ^{(k)}}=\frac{1}{ \sigma_{Z^{(k)} }} \Big( \frac{C_k}{\sigma_W} - \rho_{XW} \cdot \frac{B_k}{\sigma_X} \Big)$$
and so $C_k = (\sigma_W \rho_{XW}/\sigma_X)B_k.$

Hence the assumption of no partial correlation between $W$ and each entry of $Z$, conditional on $X$ implies that $C = \sigma_X^{-1} \sigma_W \rho_{XW} B$. Plugging these results into Equation \eqref{eq:MEbiasZ}, $$\bbzME  =  \big( A- C G^{-1} C^ \tran \big)^{-1} (B - C G^{-1} F^ \tran ) \beta_X
 = \beta_X \Big(A -  \frac{CC^\tran }{\sigma_W^2}  \Big)^{-1} \big( B - \rho_{XW}^2 B \big).$$

By noting $C$ and $B$ are column vectors and applying the Sherman-Morrison formula,
$$\begin{aligned} \bbzME & = \beta_X  \Big( A^{-1} + \frac{A^{-1} C C^\tran  A^{-1}/\sigma_W^2 }{ 1  - C^\tran  A^{-1} C/\sigma_W^2 } \Big) (1-\rho_{XW}^2) B
\\ & = \beta_X (1-\rho_{XW}^2) \Big( A^{-1} + \frac{A^{-1} B B^\tran  A^{-1} ( \rho_{XW}^2/ \sigma_X^2) }{ 1  - B^\tran  A^{-1} B  ( \rho_{XW}^2/ \sigma_X^2) } \Big) B
\\ & = \beta_X (1-\rho_{XW}^2)  A^{-1} B  \Big( 1 + \frac{B^\tran  A^{-1} B}{ \sigma_X^2/\rho_{XW}^2  - B^\tran  A^{-1} B  } \Big)
\\ & =  \varrho(A,B,\rho_{XW},\sigma_X) \times \bbzOm,
\end{aligned}$$ where $\bbzOm=A^{-1}B \beta_X$ by Equation \eqref{eq:OmVarBias} and $$\varrho(A,B,\rho_{XW},\sigma_X)  \equiv (1-\rho_{XW}^2)    \Big( 1 + \frac{B^\tran  A^{-1} B}{ \sigma_X^2/\rho_{XW}^2  - B^\tran  A^{-1} B  } \Big).$$

For the case where $\rho_{XW}=0$, the above result implies that $ \bbzME  = \bbzOm$. We now consider the case where $\rho_{XW} \neq 0$. To show $ \bbzME$ is smaller than $\bbzOm$ in this case, it suffices to show that $\varrho(A,B,\rho_{XW},\sigma_X) \in [0,1)$. To see this, note that $\sigma_X^2 - B^\tran  A^{-1} B>0$ because it is the Schur complement of the covariance matrix of $(Z,X)$, which is positive definite by Regularity Condition (iii). By positive definiteness of $A$, this implies $$\sigma_X^2 > B^\tran  A^{-1} B>0.$$

In the case where $\rho_{XW} \neq 0$, $0 < \rho_{XW}^2 \leq 1$. Combining the above result and the definition of $\varrho$, it is clear that $\varrho(A,B,\rho_{XW},\sigma_X) \geq 0$. The above result also implies that

$$\rho_{XW}^2 > \frac{B^\tran  A^{-1} B}{\sigma_X^2/\rho_{XW}^2} =  \Big( \frac{B^\tran  A^{-1} B}{\sigma_X^2/\rho_{XW}^2  - B^\tran  A^{-1} B} \Big) \Big( 1+\frac{B^\tran  A^{-1} B}{\sigma_X^2/\rho_{XW}^2  - B^\tran  A^{-1} B} \Big)^{-1}.$$ Multiplying each side of the above inequality by the second factor on the RHS, reordering terms and adding one to each side implies that $$1 > 1 + \Big( \frac{B^\tran  A^{-1} B}{\sigma_X^2/\rho_{XW}^2  - B^\tran  A^{-1} B} \Big)  - \rho_{XW}^2 \Big( 1+\frac{B^\tran  A^{-1} B}{\sigma_X^2/\rho_{XW}^2  - B^\tran  A^{-1} B} \Big)=\varrho(A,B,\rho_{XW},\sigma_X).$$

Hence we have shown that in the case where $d=1$ and there is no partial correlation between $W$ and $Z$, conditional on $X$, OVB is worse than bias due to MEB, and in particular $$ \bbzME = \varrho(A,B,\rho_{XW},\sigma_X) \times \bbzOm  \text{ where } \varrho(A,B,\rho_{XW},\sigma_X) 
\in \begin{cases}
    [0,1) & \text{ if } \rho_{XW} \neq 0,
    \\ \{1 \}  & \text{ if } \rho_{XW} = 0. \end{cases}$$

\subsection{Case where $p=d=1$ and some measurement error assumptions}

Here we consider the case where $p=d=1$ under the %
plausible 
assumption that 
$\rho_{XW} >  \vert \rho_{ZW} \vert >0$, while $\rho_{ZX}$ and $\rho_{ZW}$ have the same sign, with  $ \vert \rho_{ZX} \vert \geq \vert \rho_{ZW} \vert$.  This assumption is reasonable in many settings because $W$ is an error-prone estimate of $X$, so we would expect the correlation between $W$ and $X$ to be both positive and larger than the correlation between $W$ and $Z$. The assumption that $\rho_{ZX}$ and $\rho_{ZW}$ have the same sign is also reasonable because $W$ is an error-prone estimate of $X$. The assumption that $Z$ is more correlated with $X$ than with an error-prone estimate of $X$ will not always 
hold, 
but for many measurement error structures, including for classical measurement error, it will hold. 

In this setting $A=\sigma^2_Z$, $B=\rho_{ZX} \sigma_Z \sigma_X$, $C= \rho_{ZW} \sigma_Z \sigma_W$, $F= \rho_{XW} \sigma_X \sigma_W$ and $G=\sigma^2_W$. Since $d=1$, $\beta_X$ is a scalar so by Equation \eqref{eq:MEbiasZ},
$$\bbzME  =  \frac{B-CF/G}{A-C^2/G} \beta_X = \frac{\rho_{ZX} \sigma_X \sigma_Z-  \rho_{ZW} \rho_{XW} \sigma_X \sigma_Z}{ \sigma_Z^2 -  \rho_{ZW}^2 \sigma_Z^2} \beta_X =  \frac{ \beta_X \sigma_X (\rho_{ZX} -\rho_{ZW} \rho_{XW} )}{\sigma_Z (1- \rho_{ZW}^2)}.$$ By Equation \eqref{eq:OmVarBias} and the above result, the omitted variable bias is 
$$\bbzOm = A^{-1} B \beta_X =   \frac{\beta_X \sigma_X \rho_{ZX}  }{\sigma_Z} =\bbzME  \frac{\rho_{ZX} (1-\rho_{ZW}^2)}{\rho_{ZX} -\rho_{ZW} \rho_{XW}} 
= \frac{(1-\rho_{ZW}^2) \cdot \bbzME}{1-\rho_{ZW}^2 ( \frac{ \rho_{XW}}{\rho_{ZX} \rho_{ZW} })}.$$ 
Since by assumption $ \rho_{XW} > \vert \rho_{ZW} \vert  > 0$ and $\text{sign} (\rho_{ZW}) = \text{sign}(\rho_{ZX})$ with $\vert \rho_{ZX} \vert \geq \vert \rho_{ZW} \vert$, 
$$ \frac{ \rho_{XW}}{\rho_{ZX} \rho_{ZW} }>1\quad \text{and so} \quad \rho_{ZW}^2  \frac{ \rho_{XW}}{\rho_{ZX} \rho_{ZW} }=\frac{\rho_{ZW} \rho_{XW}}{\rho_{ZX}} < 1.$$ These two inequalities imply that $$\frac{1-\rho_{ZW}^2 }{1-\rho_{ZW}^2 ( \frac{ \rho_{XW}}{\rho_{ZX} \rho_{ZW} })}  \in (1,\infty),$$
so by the above formula for $\bbzOm$ in terms of $\bbzME$, $\text{sign} (\bbzME) = \text{sign}(\bbzOm)$, and $\vert \bbzME \vert \leq \vert \bbzOm \vert$ with $\vert \bbzME \vert=\vert \bbzOm \vert$ if and only if $\bbzOm=0$.

\subsection{OVB as limiting case of classical MEB}\label{sec:LimitCME}

In the classical measurement error model $W=X+E$, where $E$ is a random vector of errors with mean $0$ that are independent of $X$, $Z$ and $Y$. Letting $\cov(E)=\Sigma_E$ we get
$C=\cov(Z,X+E)=B$, $F=\cov(X,X+E)=D$, and $G=\cov(X+E,X+E)=D+\Sigma_E$. By plugging these formulas into Equation \eqref{eq:MEbiasZ}, $$\begin{aligned}
    \bbzME 
    = \big(A- B (D+\Sigma_E)^{-1} B^\tran  \big)^{-1} \big(B  -B (D+\Sigma_E)^{-1} D^ \tran \big) \beta_X.
\end{aligned}$$

Let $\lambda_{\text{min}}(\cdot)$ 
denote the minimum  eigenvalue of an input matrix. We consider what happens to $\bbzME$ in the limit as the classical measurement error goes to infinity, in the sense that $\lambda_{\text{min}}(\Sigma_E) \to \infty$ while $A,B$, and $D$ remain fixed. 
In that limit $(D+\Sigma_E)^{-1}\to0$, and hence,
$$\bbzME \to A^{-1}B\beta_X=\bbzOm.$$
That is, the limiting value of the MEB is the OVB.

\subsection{If $d>p$, we cannot guarantee that OVB is worse } 

If there are more covariates measured with error than 
without error, the MEB is not necessarily worse than the OVB, even under a classical measurement error assumption. Recall formulas \eqref{eq:OmVarBias} and \eqref{eq:MEbiasZ} that $$\bbzOm = A^{-1} B \beta_X \quad \text{and} \quad \bbzME =\big( A- C G^{-1} C^ \tran \big)^{-1} (B - C G^{-1} F^ \tran ) \beta_X.$$ 
Since $B \in \mathbb{R}^{p \times d}$ with $p>d$, there exists some $\beta_X \neq 0$ in the null space of $B$. It is possible that $\beta_X$ will be in the null space of $B$ but not in the null space of $(B - C G^{-1} F^ \tran )$, in which case $\bbzOm=0$ but $\bbzME \neq 0$.

The scenario where $\bbzOm=0$ but $\bbzME \neq 0$ can even happen under the strong assumption that the measurement error is classical, where $W=X+E$, and $E$ is a random vector of errors with mean $0$ that are independent of $X$, $Z$ and $Y$. For example, suppose $p=1$, $d=2$, $\beta_X =(1,-2)$, $A=\var(Z)=1$, $B=\cov(Z,X)=\begin{bmatrix} 0.2 & 0.1 \end{bmatrix}$, 
$$D=\cov(X) =  \begin{bmatrix} 1 & 0.15 \\ 0.15 & 1 \end{bmatrix} \text{and } W=X+E \ \text{ where } \e[E]=0, \cov(E)= \begin{bmatrix} 0.7 & 0.05 \\ 0.05 & 0.4 \end{bmatrix},$$ 
and $E$ is independent of $X$, $Z$, and $Y$. In this setting, $F=\cov(X,W)=D$, $G=\cov(W)=D+\cov(E)$, so by formulas \eqref{eq:OmVarBias} and $\eqref{eq:MEbiasZ}$, $\bbzOm=0$ while  $\bbzME \approx 0.0257.$

\subsection{Counterexample under no restrictions on measurement error structure}

In this subsection, we show that if there are no restrictions on the structure of the measurement error, then it is possible to have no OVB while having arbitrarily large MEB. To see this we present a counterexample, although we believe a similar measurement error structure is unlikely to arise in practice.

Suppose $p=d=1$ and that $W=X+\kappa Z$, so $A$, $B$, $C$, $D$, $F$ and $G$ are scalars, with $C=B+\kappa A$, $F=D+\kappa B$, and $G=D+2\kappa B+\kappa^2 A.$ Plugging these values into \eqref{eq:MEbiasZ}, 
$$ \bbzME = \frac{B- \frac{(B+\kappa A)(D+\kappa B)}{D+2\kappa B+\kappa^2 A} }{A- \frac{(B+\kappa A)^2}{D+2\kappa B+\kappa^2 A}} \beta_X 
 =\frac{\kappa B^2-\kappa AD}{AD-B^2} \beta_X 
 =-\kappa \beta_X.$$

Meanwhile by \eqref{eq:OmVarBias}, $\bbzOm =A^{-1} B \beta_X$. So if we take $\kappa \to \infty$, but keep the distribution of $(Z,X)$ fixed, $\bbzOm$ will remain fixed but $\bbzME$ will go to either $\pm \infty$. If we consider the special case where there is no correlation between $Z$ and $X$, $B=0$ so  $\bbzOm =0$, while $\bbzME$ can be arbitrarily large as $\kappa \to \infty$.

\section{Proofs of Proposition \ref{prop:OmBiasNegConds} and Theorem \ref{thoerem:PollutantBiasOmegaSigns}}\label{sec:ProofOfTheorems}

\subsection*{Proof of Proposition \ref{prop:OmBiasNegConds}} Assume that the conditions of Proposition \ref{prop:OmBiasNegConds} hold and that without loss of generality we may suppose that $k=p$.
     For $j \in \{1,\ldots,d \}$ define $\bm{v}_j \equiv \text{Cov}(Z,X^{(j)} ) \in \mathbb{R}^p$ which is the jth column of $B$ and define $\bm{e}_p=(0,\ldots,0,1)$ to be the $p$th standard basis vector in $\mathbb{R}^p$. By Equation \eqref{eq:OmVarBias}, \begin{equation}\label{eq:OmVarBiasDecomp}
        [\bbzOm]_p = \bm{e}_p^\tran \bbzOm = \bm{e}_p^\tran A^{-1} B \beta_X = \sum_{j=1}^d \bm{e}_p^\tran A^{-1} \bm{v}_j \beta_{X,j}.
    \end{equation} We wish to show that $[\bbzOm]_p \leq 0$, and since by the No Benefit Assumption, $\beta_{X,j} \leq 0$ for all $j \in \{1,\ldots,d \}$, it will suffice to show that $\bm{e}_p^\tran A^{-1} \bm{v}_j \geq 0$ for all $j \in \{1,\ldots,d\}$.

    Fix $j \in \{1,\ldots,d\}$ and we will show that $\bm{e}_p^\tran A^{-1} \bm{v}_j \geq 0$. To do this define $M$ to be the covariance matrix of $(Z,X^{(j)})$ and note that by the blockwise inversion formula $$M = \begin{bmatrix} A & \bm{v}_j \\ \bm{v}_j^\tran & D_{jj} \end{bmatrix} \in \mathbb{R}^{(p+1)\times (p+1)} \Rightarrow \big[ M^{-1} \big]_{p,p+1} = -  \bm{e}_p^{\tran} (A- \bm{v}_j D_{jj}^{-1} \bm{v}_j^\tran )^{-1} \bm{v}_j D_{jj}^{-1}.$$ 
Since $\bm{v}_j$ is a vector and $D_{jj}$ is a scalar, note that by the Sherman-Morrison formula, $$\begin{aligned} 
\big[ M^{-1} \big]_{p,p+1}  &= 
  -  \frac{1}{D_{jj}} \bm{e}_p^{\tran} \Big(A^{-1} +  \frac{D_{jj}^{-1} A^{-1} \bm{v}_j \bm{v}_j^\tran A^{-1}}{1-D_{jj}^{-1} \bm{v}_j^{\tran} A^{-1} \bm{v}_j } \Big) \bm{v}_j \\
& = -  \frac{1}{D_{jj}} \bm{e}_p^{\tran} A^{-1} \bm{v}_j \Big( 1 + \frac{ \bm{v}_j^{\tran} A^{-1} \bm{v}_j}{D_{jj}- \bm{v}_j^{\tran} A^{-1} \bm{v}_j} \Big). 
\end{aligned}$$

We now express $\big[ M^{-1} \big]_{p,p+1}$ in terms of the partial correlation between $Z^{(p)}$ and $X^{(j)}$ conditional on $Z \setminus Z^{(k)}$. While partial correlations are defined in terms of the correlation between population residuals, we leverage the following well known formula for partial correlations (see, for example, \cite{PartialCorrFormulaInIntro}). Given a random vector $V=(V^{(1)},\ldots,V^{(m)})$ whose covariance matrix is $\Sigma$, the partial correlation 
\begin{align}\label{eq:partialcorr}
\corr(V^{(\ell)},V^{(r)} \mid V \setminus \{V^{(\ell)},V^{(r)} \} ) = -[\Sigma^{-1}]_{\ell r}/\sqrt{[\Sigma^{-1}]_{\ell \ell}[\Sigma^{-1}]_{rr}}
\end{align}
for $\ell \neq r$. Since $M$ is the covariance matrix of $(Z^{(1)},\ldots,Z^{(p-1)},Z^{(p)},X^{(j)})$, $$\big[ M^{-1} \big]_{p,p+1} = - \sqrt{\big[ M^{-1} \big]_{p,p} \big[ M^{-1} \big]_{p+1,p+1}} \corr \big( Z^{(p)},X^{(j)} \bigm| (Z^{(1)},\ldots,Z^{(p-1)}) \big) .$$

By the Weak $\PPC$ Assumption, the partial correlation in the above formula must be nonnegative. Hence $\big[ M^{-1} \big]_{p,p+1} \leq 0$. Combining this with a previous result, \begin{align}\label{eq:thingwithparens} -\frac{1}{D_{jj}} \Big( 1 + \frac{ \bm{v}_j^{\tran} A^{-1} \bm{v}_j}{D_{jj}- \bm{v}_j^{\tran} A^{-1} \bm{v}_j} \Big) \bm{e}_p^{\tran} A^{-1} \bm{v}_j = \big[ M^{-1} \big]_{p,p+1} \leq 0. \end{align}

By the positive definiteness in Condition (iii), $A^{-1} \succ 0$ and $M \succ 0$. Hence, $D_{jj}- \bm{v}_j^{\tran} A^{-1} \bm{v}_j>0$ because it is the Schur complement of $M$, and also $\bm{v}_j^{\tran} A^{-1} \bm{v}_j>0$. Thus the term in parentheses in Equation~\eqref{eq:thingwithparens} is strictly positive. Combining this fact with the above inequality and noting that $D_{jj}>0$, it follows that $\bm{e}_p^\tran A^{-1} \bm{v}_j \geq 0$. Since this argument holds for any $j \in \{1,\ldots,d\}$, combining Equation \eqref{eq:OmVarBiasDecomp} with the No Benefit Assumption completes the proof.  \qedsymbol \newline

In order to prove Theorem \ref{thoerem:PollutantBiasOmegaSigns}, we first introduce two special categories of matrices that we need.

\begin{definition}\label{def:zmmatrices}
A matrix $M \in \mathbb{R}^{m \times m}$ is a $Z$-matrix if $M_{\ell r} \leq 0$ for all $\ell,r \in \{1,\ldots,m\}$ with $\ell \neq r$. A $Z$-matrix $M \in \mathbb{R}^{m \times m}$ is an $M$-matrix 
if all of its real eigenvalues are strictly positive. 
\end{definition}
While there are over 40 equivalent definitions of an $M$-matrix \citep{Plemmons1977}, Definition~\ref{def:zmmatrices} is the most convenient for our purposes. The proof of the following theorem uses the Pairwise $\PPC$ Assumption and Condition (iii) to show that the lower right block of the precision matrix of $(Z,X)$ is an $M$-matrix and subsequently leverages the fact that the inverse of an $M$-Matrix has only nonnegative entries. 

\subsection*{Proof of Theorem \ref{thoerem:PollutantBiasOmegaSigns}}

Assume that the conditions of Theorem \ref{thoerem:PollutantBiasOmegaSigns} hold. Recall that the classical measurement assumption has $W=X+E$ where $E$ is independent of $(Z,X,Y)$. Therefore
$C=\cov(Z,W)=\cov(Z,X)=B\in\mathbb{R}^{p\times d}$,
$G=\cov(W,W)=\cov(X,X)+\cov(E,E)=D+\Sigma_E \in \mathbb{R}^{d \times d}$ and $F^\tran =\cov(W,X)=\cov(X+E,X)=D$. Hence by formula \eqref{eq:MEbiasAll}, the MEB is given by 
$$\bbME = \begin{bmatrix} A & C \\ C^\tran & D+\Sigma_E \end{bmatrix}^{-1} \begin{bmatrix} 0 & 0 \\ 0 & -\Sigma_E \end{bmatrix} \beta =  -  \begin{bmatrix} A & C \\ C^\tran & \Sigma_E +D \end{bmatrix}^{-1} \begin{bmatrix} 0 \\ \Sigma_E \beta_X \end{bmatrix}.$$ Thus by the blockwise matrix inversion formula and the above formula, Equation \eqref{eq:MEbiasAssumption} holds.
  
Now we will show that $\Omega_{jj} >0$ for all $j$. To see this note that $D- C^ \tran A^{-1} C \succ 0$ because it is the Schur complement of the covariance matrix of $(Z,X)$, which by Condition (iii) is positive definite. Since $\Sigma_E \succeq 0$, we also have $\Sigma_E +D- C^ \tran A^{-1} C \succ 0$ and then $\Omega = (\Sigma_E + D - C^\tran A^{-1} C)^{-1} \succ 0$. Hence $\Omega_{jj} > 0$ for all $j$.

By the CUME Assumption, $\Sigma_E\succeq0$.  First we
consider the case where $\Sigma_E \succ 0$, and will show that $\Omega_{jj'} \leq 0$ for $j \neq j'$ while $\Omega_{jj} < 1/a_j$, deferring consideration of $a_j=0$ to later.  With $\Sigma_E \succ 0$, $\Sigma_E$ is invertible so by the Woodbury Matrix Identity, 
\begin{equation}\label{eq:OmegaWhenSigmaEIsPD}
    \Omega = \Sigma_E^{-1} -\Sigma_E^{-1} \Big( (D-  C^\tran A^{-1} C)^{-1} +\Sigma_E^{-1}  \Big)^{-1} \Sigma_E^{-1}.
\end{equation} Let $\Sigma \in \mathbb{R}^{(p+d)\times (p+d)}$ be the covariance matrix of $(Z,X)$.
By formula~\eqref{eq:partialcorr} for the partial correlations, $$[\Sigma^{-1}]_{p+j,p+j'} =-\sqrt{[\Sigma^{-1}]_{p+j,p+j}[\Sigma^{-1}]_{p+j',p+j'}} 
 \cdot \corr \big( X^{(j)},X^{(j')} \bigm| (Z,X) \setminus \{ X^{(j)},X^{(j')} \} \big),$$ for any $j, j' \in \{1,\ldots,d \}$ such that $j \neq j'$.
 It follows that by the Pairwise $\PPC$ Assumption, $[\Sigma^{-1}]_{p+j,p+j'} \leq 0$ when $1 \le j \neq j' \le d$. Since $\Sigma$ has blocks $A$, $B=C$, $B^\tran=C^\tran$, and $D$, by the blockwise inversion formula, the lower right block of $\Sigma^{-1}$ is $(D-  C^\tran A^{-1} C)^{-1}$. Hence the off-diagonal entries of $(D-  C^\tran A^{-1} C)^{-1}$ must be nonpositive, so $(D-  C^\tran A^{-1} C)^{-1}$ is a $Z$-matrix. Meanwhile, $(D-  C^\tran A^{-1} C)^{-1} \succ 0$ because as mentioned previously $D-  C^\tran A^{-1} C \succ 0$. The matrix $(D-  C^\tran A^{-1} C)^{-1}$ is also symmetric because it is the lower right block of $\Sigma^{-1}$ where $\Sigma$ is symmetric. Since $(D-  C^\tran A^{-1} C)^{-1} \succ 0$ is symmetric, all its eigenvalues are real and positive and so $(D-  C^\tran A^{-1} C)^{-1}$ is an $M$-matrix.

 Since $\Sigma_E^{-1}$ is a nonnegative diagonal matrix and $(D-  C^\tran A^{-1} C)^{-1}$ is an M-matrix, by Theorem 2.1 in \cite{Plemmons1977} the inverse of their sum must not have any negative entries. Thus  $\big( (D-  C^\tran A^{-1} C)^{-1} +\Sigma_E^{-1}  \big)^{-1}$ has only nonnegative entries. Because $\Sigma_E^{-1}$ and $\big( (D-  C^\tran A^{-1} C)^{-1} +\Sigma_E^{-1}  \big)^{-1}$ both only have nonnegative entries, their sandwich product, $\Sigma_E^{-1} \big( (D-  C^\tran A^{-1} C)^{-1} +\Sigma_E^{-1}  \big)^{-1} \Sigma_E^{-1}$, only has nonnegative entries. Because $\Sigma_E^{-1}$ is diagonal and by Equation \eqref{eq:OmegaWhenSigmaEIsPD}, it follows that  $\Omega_{jj'} \leq 0$ for all $j \neq j'$. 

 By Equation \eqref{eq:OmegaWhenSigmaEIsPD} and since $(D- C^ \tran A^{-1} C)^{-1} \succ 0$ and $\Sigma_E^{-1} \succ 0$, it must follow that $$\Sigma_E^{-1}-\Omega = \Sigma_E^{-1} \big( (D-  C^\tran A^{-1} C)^{-1} +\Sigma_E^{-1}  \big)^{-1} \Sigma_E^{-1} \succ 0.$$ Thus $[\Sigma_E^{-1}-\Omega]_{jj} >0$. Since $[\Sigma_E^{-1}-\Omega]_{jj} >0$ and $[\Sigma_E^{-1}]_{jj}=1/a_j$, $\Omega_{jj} < 1/a_j$. We have thus shown that in the case where $\Sigma_E \succ 0$, $\Omega_{jj'} \leq 0$ for $j \neq j'$ while $\Omega_{jj} < 1/a_j$.

Now consider the case where $\Sigma_E \nsucc 0$, so $a_{j^*}=0$ for one or more $j^* \in \{1,\ldots,d\}$, and clearly $\Omega_{j^*j^*} \le 1/a_{j^*}=\infty$. To upper bound $\Omega_{jj}$ and $\Omega_{jj'}$, consider a sequence of diagonal matrices $(\Sigma_E^{(m)})_{m=1}^{\infty}$ with strictly positive diagonal entries such that $\lim_{m \to \infty} \Sigma_E^{(m)} = \Sigma_E$. Observe that $g : \mathbb{R}^{d \times d} \to \mathbb{R}^{d \times d}$ given by $g(M)= (M +D-  C^\tran A^{-1} C )^{-1}$ is well defined at $\Sigma_E$ and therefore continuous at $\Sigma_E$. Thus, $\Omega=g(\Sigma_E) = \lim_{m \to \infty} g(\Sigma_E^{(m)})$, while by previous results for $j \neq j'$ and for all $m \in \mathbb{N}$, $[g(\Sigma_E^{(m)})]_{jj'} \leq 0$, $[g(\Sigma_E^{(m)})]_{jj} < 1/a_j$. Hence for $j \neq j'$, $$\Omega_{jj'} = \lim\limits_{m \to \infty} \Big[g(\Sigma_E^{(m)}) \Big]_{jj'} \leq 0 \quad \text{and} \quad \Omega_{jj} = \lim\limits_{m \to \infty} \Big[g(\Sigma_E^{(m)}) \Big]_{jj} \leq 1/a_j.$$ Thus we have shown $\Omega_{jj'} \leq 0$ and $\Omega_{jj} \leq 1/a_j$ for any $j \neq j'$. \qedsymbol

\section{Details on random covariance matrices in our simulations}\label{sec:MCSimsRandomCovMatricesDetails}

In each Monte Carlo simulation we generated a random covariance matrix for $(Z,X,W)$ in order to meet a few desiderata. First, we wanted the sample of random covariance matrices to cover a broad range of possible correlation structures for $(Z,X)$. Second, we wanted the sample of random covariance matrices to include sufficiently many scenarios where the average pairwise correlation between the pollutants was positive, as we expect that in most motivating applications the average pairwise correlation of the pollutants is positive. Third, we wanted the measurement error to follow a more general structure than that assumed in Section \ref{sec:MEbias_genConditions} to see if the conclusions of Corollaries \ref{cor:NegBiasNoME} and \ref{cor:NegBiasNulPol} tended to hold without the restrictive CUME assumption.

To meet the third desideratum, we supposed that $X$ and $W$ came from the Berkson-Classical mixture model described in Section 3.2.5.5 of \cite{CarrolStefanski06}, because the model is flexible and easy to work with in simulations. \footnote{Another rational for using a Berkson-Classical mixture model is that there are notions in air pollutant measurement error literature that the measurement error can be decomposed into a sum of a Berkson-like component and a classical-like component \citep{Szpiro2011,SzpiroPaciorekEnvironmetrics,SpatialSimex}. However, as noted in Section 3.2 of \cite{Szpiro2011}  the classical-like component is distinct from classical measurement error in important ways, so, in spite of the aforementioned decomposition, the Berkson-Classical measurement error model we use does not capture all measurement error structures of interest.} In particular we supposed that $X= \mathcal{L}_X +U_b$ and $W= \mathcal{L}_X +U_c$, where $\mathcal{L}_X$, $U_b$, and $U_c$ are 3 independent mean $0$ random vectors in $\mathbb{R}^5$ which are independent of $Z$ and denote the latent vector, the Berkson-type error and the Classical-type error component, respectively.  Defining $\Sigma_b \equiv \text{Cov}(U_b)$ and $\Sigma_c \equiv \text{Cov}(U_b)$, in our Monte-Carlo simulations we drew 
$$\Sigma_b,\Sigma_c \stackrel{\mathrm{ind}}{\sim}\text{Wishart}_5 \Big(\frac{1}{3} I_5, 10 \Big).$$ 
Now the covariance matrix of $W-X$ is $\Sigma_b +\Sigma_c$ which is clearly not diagonal, and since the number of degrees of freedom $10$ is not much larger than $5$, $\Sigma_b +\Sigma_c$ will often be far from being diagonal (even after appropriately scaling to have diagonal entries that average $1$). This ensures that, unlike our theoretical analysis in Section \ref{sec:MEbias_genConditions}, our simulations explore situations where the measurement errors across pollutants are quite correlated.

Next we generated a random covariance matrix for $(Z,\mathcal{L}_X)$, by drawing $$\Sigma_{\mathcal{L}} \sim \text{Wishart}_{10} ( V_{\mu} , 10 ), \quad \text{where }\  V_{\mu} \equiv \begin{bmatrix} I_4\  & 0 \\[1ex] 0\  & \dfrac45 I_6 + \dfrac15 \bm{1}_6 \bm{1}_6^\tran  \end{bmatrix} .$$ 
Note that $\Sigma_{\mathcal{L}}$ is drawn from a Wishart distribution with the minimum allowable number of degrees of freedom, given the $10 \times 10$ dimension of $\Sigma_{\mathcal{L}}$. This means we are drawing $\Sigma_{\mathcal{L}}$ from a distribution with the largest possible spread (among Wishart distributions with scale matrix $V_{\mu}$) across the space of positive semi-definite $10 \times 10$ matrices. This choice of degrees of freedom allows us to meet our first desideratum that we wanted to cover a broad range of possible correlation structures for $(Z,X)$. Also note that we chose the off-diagonal entries corresponding to pairs of pollutants in $V_{\mu}$ to be a positive value of $0.2$ in order to meet our second desideratum that the simulations would include many settings where the pollutants had positive pairwise correlations.

After generating $\Sigma_b$, $\Sigma_u$, and $\Sigma_{\mathcal{L}}$ the covariance matrix for $(Z,X,W)$ was given by, $$\cov \Bigg( \begin{bmatrix} Z \\ X  \\ \tx \end{bmatrix}  \Bigg) = \begin{bmatrix} A & B & C \\ B^\tran & D & F  \\ C^\tran & F^\tran & G \end{bmatrix} = \begin{bmatrix} [\Sigma_{\mathcal{L}}]_{1:5,1:5} & [\Sigma_{\mathcal{L}}]_{1:5,6:10} &  [\Sigma_{\mathcal{L}}]_{1:5,6:10} \\ [\Sigma_{\mathcal{L}}]_{6:10,1:5} & \quad[\Sigma_{\mathcal{L}}]_{6:10,6:10} + \Sigma_b & [\Sigma_{\mathcal{L}}]_{6:10,6:10}  \\  [\Sigma_{\mathcal{L}}]_{6:10,1:5} & [\Sigma_{\mathcal{L}}]_{6:10,6:10} &  [\Sigma_{\mathcal{L}}]_{6:10,6:10} + \Sigma_c \end{bmatrix}.$$

\begin{figure}[t]
\centering
\includegraphics[width=0.95 \hsize]{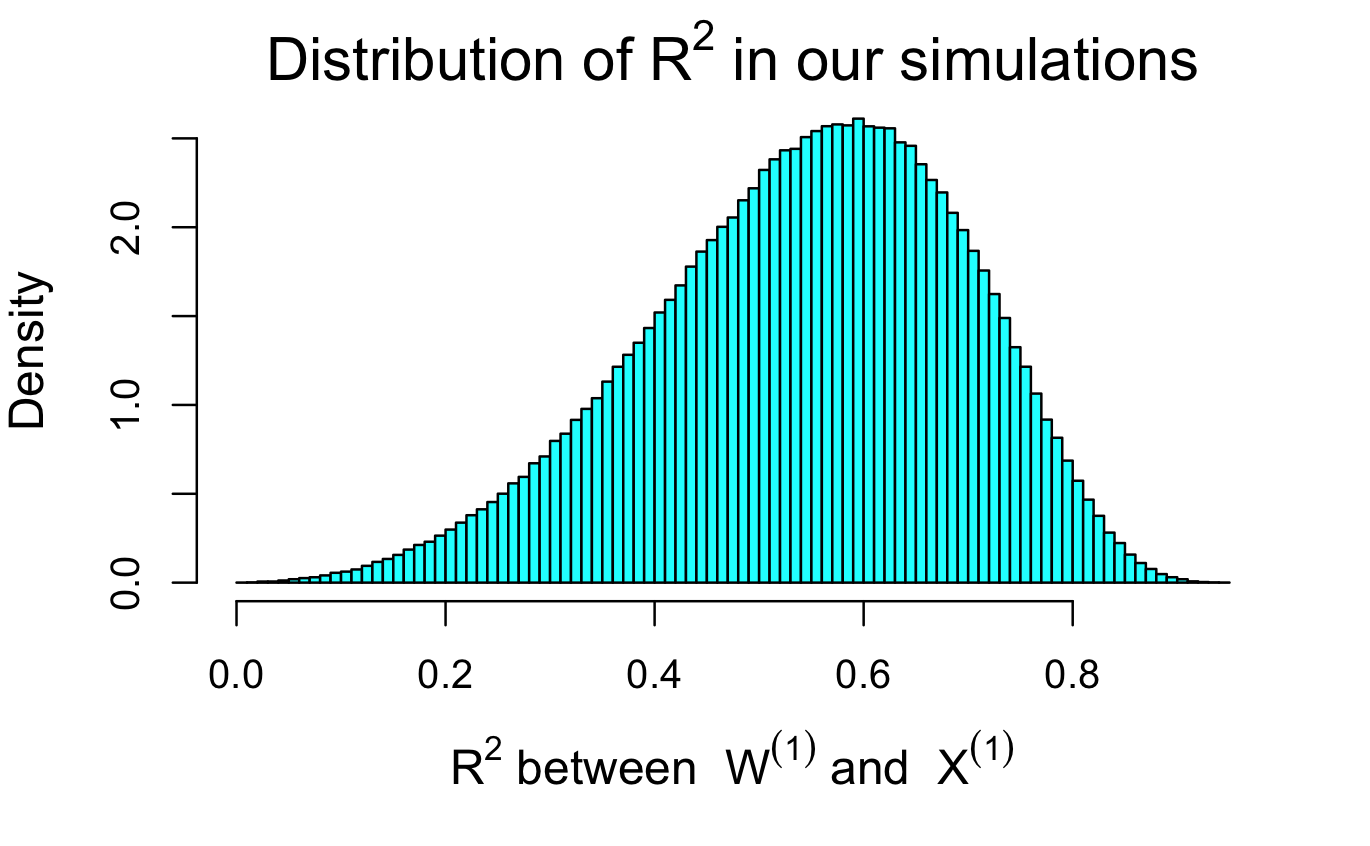}
\caption{\label{fig:R2Dist} Distribution of $R^2$ values between $W^{(1)}$ and $X^{(1)}$ in our simulations. For $j=2,\ldots,5$ this histogram also gives the distribution of the $R^2$ values between $W^{(j)}$ and $X^{(j)}$ because of the symmetry in generating $W$ and $X$ in our simulations.}
\end{figure}

We remark that the factor of $1/3$ in the first parameter of the Wishart distribution from which $\Sigma_b$ and $\Sigma_c$ were sampled was set so that 
$W$ would be a predictor of $X$ that has realistic accuracy. In particular, if we chose a factor lower than $1/3$, $W$ would tend to be a more accurate predictor of $X$ and if we chose a higher factor than $1/3$, $W$ would tend to be a less accurate predictor of $X$. For $j=1,\ldots,5$, Figure \ref{fig:R2Dist} depicts the distribution of $R^2$ values between the error-prone proxy $W^{(j)}$ and the true value $X^{(j)}$ across our simulations. Many data products for air pollutants have $R^2$ with the ground truth in the $0.3$ to $0.85$ range and our simulations reflect this.

\section{Rescaling coefficients in terms of WHO targets}\label{sec:WHO_rescaling}

To convert $\CO$, $\Ozone$, $\NitrogenDi$ and $\SulfurDi$ to be on the same scale as the WHO 2021 targets, we converted our concentration data of these pollutants from units of ppb to units of $\mu$g/m$^3$ based on the conversion factors at 20 °C and 101.3 kPa presented in Chapter 3 of \cite{WHO_guidelines2021}. The distribution of pollutant concentrations in our panel dataset after rescaling to be in units $\mu$g/m$^3$ can be found in Figure \ref{fig:AQSvsWHOtargets}. We subsequently rescale the pollutants based on the WHO 2021 targets (depicted as vertical dashed red lines in Figure \ref{fig:AQSvsWHOtargets}) based on the values presented in Table 3.26 of \cite{WHO_guidelines2021}. In particular, we divide the average June--August concentrations of $\CO$, $\NitrogenDi$, $\SulfurDi$, $\PMBig$, and $\PMSmall$ by $4{,}000$ $\mu$g/m$^3$, $25$ $\mu$g/m$^3$, $40$ $\mu$g/m$^3$, $45$ $\mu$g/m$^3$, and $15$ $\mu$g/m$^3$, respectively, which were the daily WHO targets for these pollutants set in 2021. We divide by daily targets rather than annual targets, because annual targets were not available for each of the pollutants and because none of the targets were specifically for the June--August period (see Table 3.26 of \cite{WHO_guidelines2021}). For $\Ozone$ we divided the June--August average of the daily maximum average concentration over each 8 hour window by $100$  $\mu$g/m$^3$, which was the WHO $\Ozone$ target for average concentration in each 8 hour window (in Table 3.26 of \cite{WHO_guidelines2021}, the WHO did not set targets for daily averages for $\Ozone$ as it did for the other 5 pollutants).

\begin{figure}[t]
\centering
\includegraphics[width=0.95 \hsize]{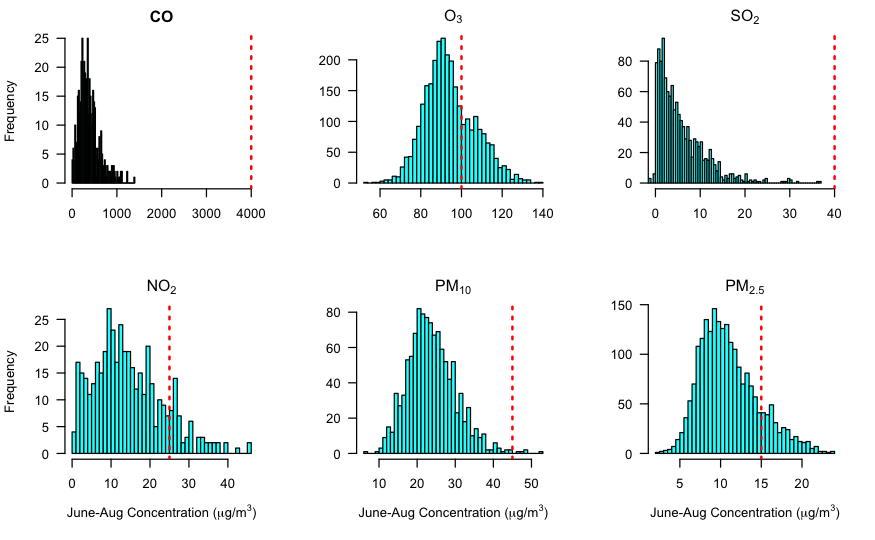}
\caption{\label{fig:AQSvsWHOtargets} Pollutant concentrations in dataset versus WHO 2021 targets. The histograms give the distribution of average, county-level June--August concentrations in our panel dataset. The dashed vertical line gives the daily target based on the WHO 2021 guidelines.}
\end{figure}

\end{document}